\definecolor{webgreen}{rgb}{0,.5,0}
\definecolor{webbrown}{rgb}{.6,0,0}
\def\modd#1 #2{#1\ \mbox{\rm (mod}\ #2\mbox{\rm )}}
\author{Jason Bell\\
Department of Pure Mathematics \\
University of Waterloo \\
Waterloo, ON N2L 3G1 \\
Canada\\ 
{\tt jpbell@uwaterloo.ca} \\
\and Thomas Finn Lidbetter and Jeffrey Shallit\\
School of Computer Science\\
University of Waterloo \\
Waterloo, ON  N2L 3G1 \\
Canada\\
{\tt finn.lidbetter@uwaterloo.ca} \\
{\tt shallit@uwaterloo.ca}
}
\title{Additive Number Theory via Approximation by Regular Languages}
\begin{document}

\maketitle

\theoremstyle{plain}
\newtheorem{theorem}{Theorem}
\newtheorem{corollary}[theorem]{Corollary}
\newtheorem{lemma}[theorem]{Lemma}
\newtheorem{proposition}[theorem]{Proposition}

\theoremstyle{definition}
\newtheorem{definition}[theorem]{Definition}
\newtheorem{example}[theorem]{Example}
\newtheorem{conjecture}[theorem]{Conjecture}

\theoremstyle{remark}
\newtheorem{remark}[theorem]{Remark}
\newtheorem{observation}{Observation}

\def\Que{{\mathbb{Q}}}
\def\Enn{{\mathbb{N}}}
\def\Zee{{\mathbb{Z}}}
\newcommand{\seqnum}[1]{\href{http://oeis.org/#1}{\underline{#1}}}
\def\mS{{\mathcal{S}}}

\begin{abstract}
We prove some new theorems in additive number theory, using novel techniques from automata theory and formal languages.  As an example of our method, we prove that every natural number $>25$ is the sum of at most three natural numbers whose base-$2$ representation has an equal number of $0$'s and $1$'s.  
\end{abstract}

\section{Introduction}
\label{intro}

Additive number theory is the study of the additive properties of the natural numbers \cite{Nathanson:1996,Nathanson:2000}.   As an example of a theorem in this area, consider Lagrange's theorem:  every natural number is the sum of four squares of natural numbers.  
Suppose $S \subseteq \Enn = \{ 0,1,2,\ldots \}$.  The fundamental problem of additive number theory is to determine whether there exists an integer $m$ such that every element of $\Enn$ (resp., every sufficiently large
element of $\Enn$) is the sum of at most $m$ elements of $S$. 
If so, we call $S$ an {\it additive basis of order $m$} (resp., an {\it asymptotic additive basis of order $m$}).
If such an $m$ exists, we also want to find the smallest such $m$.

Recently there has been interest in solving this problem for sets of integers whose base-$k$ expansions match certain patterns.
For example, Banks \cite{Banks:2016} proved that every natural number is the sum of at most $49$ natural numbers whose base-$10$ expansion is a palindrome.  Next, Cilleruelo, Luca, and Baxter
\cite{Cilleruelo&Luca&Baxter:2017} proved that every natural number is the sum of at most $3$ natural numbers whose base-$k$ expansion is a palindrome, for $k \geq 5$.  Finally, the classification was completed by Rajasekaran, Shallit, and Smith \cite{Rajasekaran&Shallit&Smith:2018},
who proved optimal results for bases $k = 2, 3, 4$.  Their method was to construct an automaton $A$ that accepted the representation of those numbers that are the sum of a certain number of palindromes of certain sizes, and then use a decision procedure to characterize the set of numbers accepted by $A$.

In this paper we introduce a different (but related) automaton-based technique for additive number theory.  Suppose we want to show that a given set $S$ of natural numbers forms an additive basis (resp., asymptotic additive basis) of order $m$.  Instead of considering $S$, we consider a subset $S'$ of $S$ for which the set of base-$k$ representations of its elements forms a regular language.  Such a subset is sometimes called {\it $k$-automatic}; see \cite{Allouche&Shallit:2003}.  For such $S'$, and for natural numbers $m'$, it is known that the language of numbers representable as a sum of $m'$ elements of $S'$ is also $k$-automatic \cite{Bell&Hare&Shallit:2017}.
Then we
show (perhaps with some small number of exceptions that typically can be handled in some other way) that $S'$ forms
an additive basis (resp., asymptotic additive basis) of order $m'$.   
Since $S' \subseteq S$, we have now proved that $m \leq m'$.  
We hope that if $S'$ is chosen appropriately, then in fact $m = m'$.
This is the method of {\it regular underapproximation}.

Analogously, consider a $k$-automatic superset $S''$ of $S$ (that is,
a set for which $S \subseteq S''$).  We then compute the set of numbers {\it not\/} representable as a sum of $m''$ elements of $S''$; this set is also $k$-automatic.    If it is nonempty (resp., infinite),
then $S''$, and hence $S$, cannot be an additive
basis (resp., an asymptotic additive basis) of order $m''$.  In this case we have proved that $m'' < m$.  
We hope that if $S''$ is chosen appropriately, then
$m = m'' + 1$.  This is the method of {\it regular overapproximation}.

We call these two techniques together the {\it method of regular approximation}, and we apply them to a number of different sets that have been previously studied. 
In each case we are able to find the smallest $m$ such that the set forms an additive basis (or asymptotic additive basis) of order $m$.   Although the notion of regular approximation is not new \cite{Mohri&Nederhof:2001,Cordy&Salomaa:2007}, our application of it to additive number theory is.

There is a simple criterion for deciding, given a $k$-automatic set, whether it forms an additive basis of finite order (resp., an asymptotic additive basis).  If it does, there is an algorithm for determining the least $m$
for which it forms an additive basis (resp., an asymptotic additive basis) of order $m$ \cite{Bell&Hare&Shallit:2017}.  The advantage to this approach is that all (or almost all) of the computation amounts to manipulation of automata, and hence can be carried out using existing software tools.   In obtaining our results, we made extensive use of two software packages:  {\tt Grail}, for turning regular expressions into automata \cite{Raymond&Wood:1994}, and {\tt Walnut}, for deciding whether a given $k$-automatic set forms an additive basis of order $m$ \cite{Mousavi:2016} (and more generally,
answering first-order queries about the elements of a $k$-automatic set).

Regular underapproximation does not always give the optimal bound.  For example, define $S_i = \{ n \in \Enn \, : \, n \equiv \modd{i} {3} \}$ for
$0 \leq i \leq 2$, and let $T_n = S_1 \cup \{0,3,6,\ldots, 3n\}$ for $n \geq 0$.   
Then each $T_n$ is a regular underapproximation of $S_0 \, \cup \, S_1$.
However, each $T_n$ forms an additive basis of least order $3$, while
$S_0 \, \cup \, S_1$ forms an additive basis of order $2$.

\section{Notation}

We assume a familiarity with formal languages and automata theory.  For all undefined notions, see, e.g., \cite{Hopcroft&Ullman:1979}.

We define $\Sigma_k = \{ 0,1,\ldots, k-1 \}$.  For $n \in \Enn$ we define
$(n)_k$ to be the canonical base-$k$ representation of $n$ (most-significant digit first, without leading
zeroes).  This is extended to sets $S \subseteq \Enn$ in the obvious
way:  $(S)_k = \{ (n)_k \ : \ n \in S \}$. 
For a word $x \in \Sigma_k^*$ we define $[x]_k$ to be the value of $x$ interpreted
as a number in base $k$ (most significant digit first); this operation is also extended to languages.
For a word
$x \in \Sigma_k^*$, we define $|x|_a$ to be the number of occurrences of the
letter $a$ in $x$.

In this paper, we start by considering six sets and their corresponding languages, defined in Table~\ref{tab1}.
The OEIS column refers to the corresponding entry in the {\it On-Line Encyclopedia of Integer Sequences}
\cite{oeis}.
\begin{table}[H]
\begin{center}
\begin{tabular}{|c|c|c|}
Set & Language & Entry in OEIS \\
\hline
\quad $S_{=} = \{ n \in \Enn \ : \ |(n)_2|_0 = |(n)_2|_1 \, \}$ \quad & \quad$L_{=} = (S_{=})_2$ \quad& \seqnum{A031443} \\[.05in]
\quad $S_{<} = \{ n \in \Enn \ : \ |(n)_2|_0 < |(n)_2|_1 \, \}$ \quad& \quad$L_{<} = (S_{<})_2$\quad & \seqnum{A072600} \\[.05in]
\quad$S_{\leq}= \{ n \in \Enn \ : \ |(n)_2|_0 \leq |(n)_2|_1 \, \}$\quad &\quad $L_{\leq} = (S_{\leq})_2$ \quad& \seqnum{A072601} \\[.05in]
\quad $S_{>} =\{ n \in \Enn \ : \ |(n)_2|_0 > |(n)_2|_1 \, \}$\quad & \quad$L_{>} = (S_{>})_2$ \quad& \seqnum{A072603} \\[.05in]
\quad $S_{\geq} = \{ n \in \Enn \ : \ |(n)_2|_0 \geq |(n)_2|_1 \, \}$\quad &\quad $L_{\geq} = (S_{\geq})_2$\quad & \seqnum{A072602}	\\[.05in]
\quad $S_{\neq} = \{ n \in \Enn \ : \ |(n)_2|_0 \neq |(n)_2|_1 \, \}$\quad &\quad $L_{\neq} = (S_{\neq})_2$\quad & \seqnum{A044951}  \\[.05in]
\hline
\end{tabular}
\end{center}
\caption{Sets considered and their corresponding languages.}\label{tab1}
\end{table}
Note that all these languages are context-free.

When we display DFA's in this paper, any dead state and transitions to the dead
state are typically omitted.

\section{First-order statements and {\tt Walnut}}

We use the free software {\tt Walnut}, which works with first-order formulas and automata.   We fix a base $k$ (usually $k = 2$ in this paper) and assume the
input to the automaton is written in base $k$, starting with the most significant
digit.   We identify an automaton $B$ with
its corresponding characteristic sequence
${\bf b} = (b(n))_{n \geq 0}$, taking the value $1$ if $(n)_k$ is accepted by $B$
and $0$ otherwise.  For technical reasons, $B$ must provide the correct result even
if leading zeroes appear in the input.

{\tt Walnut} can determine whether a first-order statement, including indexing and addition, about the values of $\bf b$ is true.  The underlying domain is the natural numbers.  For example, the first-order statement
\begin{equation}
\forall n \ \exists x,y,z \ (n=x+y+z) \ \wedge \ (b(x) = 1) \ \wedge \ (b(y) = 1) \ \wedge \ (b(z) = 1) 
\label{fir}
\end{equation}
asserts that, for all $n$, the natural number $n$ is the sum of three integers for which the value of $\bf b$ is 
$1$.  In other words, if $S$ is the set of integers whose base-$k$ representation
is accepted by $B$, then $S$ forms an additive basis of order $3$.  

The corresponding {\tt Walnut} command is a straightforward translation of the statement in
\eqref{fir}:

\centerline{\tt A n \ E x,y,z \ (n=x+y+z) \& (B[x]=@1) \& (B[y]=@1) \& (B[z]=@1)  .}

Furthermore, if the prefix {\tt A n} is omitted, then the result is an automaton that
accepts the base-$k$ representations of those $n$ having a representation as a sum of $3$ elements of $S$.    A similar first-order statement makes the analogous assertion with ``all $n$" replaced by ``all sufficiently large $n$".

\section{Example of the method:  the set $S_{\geq}$}

We start with a very simple example of our method, discussing the additive properties of those numbers with at least as many $0$'s as $1$'s in their base-$2$ expansion.
The first few such numbers are 
$$ 2,4,8,9,10,12,16,17,18,20,24,32,33,34,35,36,37,38,40, \ldots  .$$

\begin{theorem}
Every natural number except $1$, $3$, $5$, $7$ is the sum of at most three elements of $S_{\geq}$.
\label{geq-thm}
\end{theorem}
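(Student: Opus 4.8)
The plan is to apply the method of regular underapproximation. Since $S_{\geq}$ is context-free but not regular, the decision procedure of \cite{Bell&Hare&Shallit:2017} cannot be applied to it directly; instead I would exhibit a regular subset $S' \subseteq S_{\geq}$ and prove the sharper statement that every natural number other than $1,3,5,7$ is a sum of at most three elements of $S'$. Because $S' \subseteq S_{\geq}$, this immediately implies the theorem's positive direction.

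First I would fix a suitable $S'$. A clean candidate is the set of $n$ all of whose base-$8$ digits lie in $\{0,1,2,4\}$ — equivalently, those $n$ whose binary representation contains at most one $1$ in each block of three consecutive symbols — with the (in fact unique) small value $n=1$ that is not already in $S_{\geq}$ deleted. Then $(S')_2$ is regular; a short count shows that any binary word of length at least $4$ of this shape has at least as many $0$'s as $1$'s, so $S' \subseteq S_{\geq}$; and $S'$ contains both even and odd numbers and numbers with arbitrarily many $1$'s, which is essential since no odd number is a sum of at most three even numbers. I would write down a regular expression for $(S')_2$, turn it into a DFA $B$ with {\tt Grail}, and check that $[L(B)]_2 \subseteq S_{\geq}$.

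Next I would evaluate in {\tt Walnut} the formula \eqref{fir} for $B$ with its leading quantifier {\tt A n} removed; this returns a DFA recognizing exactly the base-$2$ representations of those $n$ that are \emph{not} a sum of at most three elements of $S'$. With a good choice of $S'$ this automaton accepts precisely $\{1,3,5,7\}$, finishing the positive direction; if instead it accepts a slightly larger finite set, I would dispatch each extra element by exhibiting a representation as a sum of at most three elements of $S_{\geq}$ by hand (using, if needed, elements of $S_{\geq}\setminus S'$). Finally, to justify the word ``except'', I would observe that the only elements of $S_{\geq}$ not exceeding $7$ are $0,2,4$, all even, so that $1,3,5,7$ genuinely fail; {\tt Walnut} cannot certify this directly because $S_{\geq}$ is not regular, but it is a trivial finite check.

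The one genuinely non-mechanical step is the choice of $S'$: it must be regular yet contained in the context-free set $S_{\geq}$, and simultaneously rich enough — containing both parities and numbers with arbitrarily many $1$'s — that threefold sums miss only finitely many integers. A regular subset that is too sparse (say, one consisting only of powers of $2$, whose threefold sumset has density $0$) leaves infinitely many exceptions and is useless, so some experimentation, reading off the exception automaton {\tt Walnut} returns at each attempt, may be needed before the leftover exception set is as small as $\{1,3,5,7\}$.
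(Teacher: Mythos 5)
Your proposal is correct and follows essentially the same route as the paper's proof: a regular underapproximation of $S_{\geq}$ is fixed (the paper uses $L_1 = 1(01+0)^*-(10)^*1$ after ``a bit of experimentation,'' while you propose the numbers with base-$8$ digits in $\{0,1,2,4\}$ minus $\{1\}$), {\tt Grail}/{\tt Walnut} certify that its threefold sumset misses only a finite set, and the failure of $1,3,5,7$ is checked by hand since all elements of $S_{\geq}$ below $8$ are even. Your particular choice does work (a per-digit decomposition $d=a+b+c$ with $a,b,c\in\{0,1,2,4\}$ shows every $N\geq 8$ is covered, leaving exactly $\{1,3,5,7\}$), though note the containment in $S_{\geq}$ should be argued via aligned blocks of three bits rather than arbitrary windows (e.g., $12=(1100)_2$ is in your set).
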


We start with a ``conventional'' proof of this theorem.  The reader will note the use of an argument requiring several special cases.

\begin{proof}
Given $N$ we want to represent, let $n_1$ (resp., $n_2$, $n_3$) be the integer
formed by taking every $3$rd $1$, starting with the
first $1$ (resp., second $1$, third $1$), in the base-$2$ representation of $N$.
Provided there are at least four $1$'s in $(N)_2$,
every $1$ in $(n_i)_2$, for $1 \leq i \leq 3$, is associated with at least two following zeroes, except possibly the very last $1$, and hence $n_i \in S_{\geq}$.

This construction can fail on odd numbers whose base-$2$ representation has three $1$'s or fewer, so we must treat those as special cases.

For numbers of the form $N = 2^i + 1$ with $i \geq 3$, we can
take $n_1 = 2^i + 1$, $n_2 = n_3 = 0$.

For numbers with binary representation $1 0^i 1 0^j 1$,
we can take $n_1 = [1 0^{i+j+1} 1]_2$,
$n_2 = [1 0^{j+1}]_2$, $n_3 = 0$.   This works provided
$i + j + 1 \geq 2$ and $j+1 \geq 1$.  

This covers all cases except $N = 1,3,5,7$.   
\end{proof}

Now we reprove the same theorem, using our method of regular approximation.
We start by finding a regular language that is both (a) sufficiently dense and for which the represented numbers form
(b) a subset of $S_{\geq}$.  After a bit of experimentation, we choose
$L_1 = 1(01+0)^*-(10)^*1 = (10)(10)^*(0(0+10)^*(1 + \epsilon) + \epsilon)$.

\begin{theorem}
Every natural number except $1$, $3$, $5$, $7$ is the sum of at most three natural numbers whose base-$2$
representations lie in the regular language
$L_1 = 1(01+0)^*-(10)^*1 = (10)(10)^*(0(0+10)^*(1+\epsilon)+\epsilon)$.
\label{geqa-thm}
\end{theorem}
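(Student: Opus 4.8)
The plan is to reduce Theorem~\ref{geqa-thm} to a first-order statement over the natural numbers that {\tt Walnut} can decide, exactly as sketched in Section~3. First I would verify that the language $L_1 = 1(01+0)^*-(10)^*1 = (10)(10)^*(0(0+10)^*(1+\epsilon)+\epsilon)$ does indeed give numbers lying in $S_{\geq}$; this amounts to a short combinatorial check that every word in $L_1$ has at least as many $0$'s as $1$'s, which follows because each $1$ in a word of $L_1$ is immediately followed by a block $0$ or $0$ (from the factors $01$ and $0$), with the possible exception of a trailing $1$ that is preceded by its own $0$. Next I would build, using {\tt Grail}, a DFA $B_1$ for $L_1$, padded so that it behaves correctly in the presence of leading zeroes, and import it into {\tt Walnut} as the characteristic sequence $\mathbf{b_1}$.

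The core of the argument is then to run the {\tt Walnut} command corresponding to the first-order formula
\begin{equation}
\forall n \ \big( n \le 7 \big) \ \vee \ \big( \exists x,y,z \ (n=x+y+z) \ \wedge \ (b_1(x)=1) \ \wedge \ (b_1(y)=1) \ \wedge \ (b_1(z)=1) \big),
\end{equation}
where the bounded exception $n \le 7$ is included precisely to exclude the four values $1,3,5,7$ (the values $0,2,4,6$ being handled since $0,2,4\in [L_1]_2$ and $6 = 2+4 = [10]_2 + [100]_2$ with $4 = [100]_2 \in [L_1]_2$). Actually, to get the sharp statement I would instead ask {\tt Walnut} to output the automaton accepting those $n$ \emph{not} representable as such a sum, and then confirm by inspection that this automaton accepts exactly $\{1,3,5,7\}$. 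Since $L_1$ represents a subset of $S_{\geq}$, a positive answer here reproves Theorem~\ref{geq-thm} as a corollary, and this is the promised illustration of regular underapproximation achieving the optimal bound.

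The step I expect to be the main obstacle is the choice of $L_1$ itself: the language must be simple enough that the resulting DFA is small, yet dense enough that three summands from $[L_1]_2$ already cover all of $\Enn \setminus \{1,3,5,7\}$. The guiding heuristic, as in the conventional proof, is to split the $1$'s of $(N)_2$ into three interleaved subsequences so that each subsequence, read as a binary string, has each of its $1$'s backed by enough following $0$'s; the regular expression $(10)(10)^*(0(0+10)^*(1+\epsilon)+\epsilon)$ is engineered so that its words are exactly such "spread-out" strings together with a controlled tail. Once a workable $L_1$ is fixed, everything else is a finite automaton computation that {\tt Walnut} carries out mechanically, so there is no further mathematical difficulty — only the routine verification that the machine's output is the set $\{1,3,5,7\}$.
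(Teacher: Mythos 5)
Your proposal is correct and follows essentially the same route as the paper: verify $[L_1]_2 \subseteq S_{\geq}$, build the DFA with {\tt Grail} (padded for leading zeroes), and have {\tt Walnut} produce the automaton for sums of three elements of $[L_1]_2$, checking by inspection that the complement is exactly $\{1,3,5,7\}$. No substantive differences from the paper's proof.
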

\begin{proof}
First, use the {\tt Grail} command

{\tt echo '0*+0*10(10)*(0(0+10)*(1+"")+"")' | retofm | fmdeterm |}\\
{\tt fmmin | fmcomp | fmrenum > ge1}

\noindent to create an automaton {\tt ge1} accepting $L_1$. 
Here {\tt ""} is {\tt Grail}'s way of
representing the empty word $\epsilon$. Note that
every element of $L_1$ has at least
as many $0$'s as $1$'s.  
Also, we added $0^*$ in two places to get all representations with leading zeroes, including all representations of $0$.
This produces the automaton given in
Figure~\ref{fig10} below.
\begin{figure}[H]
\begin{center}
\includegraphics[width=5in]{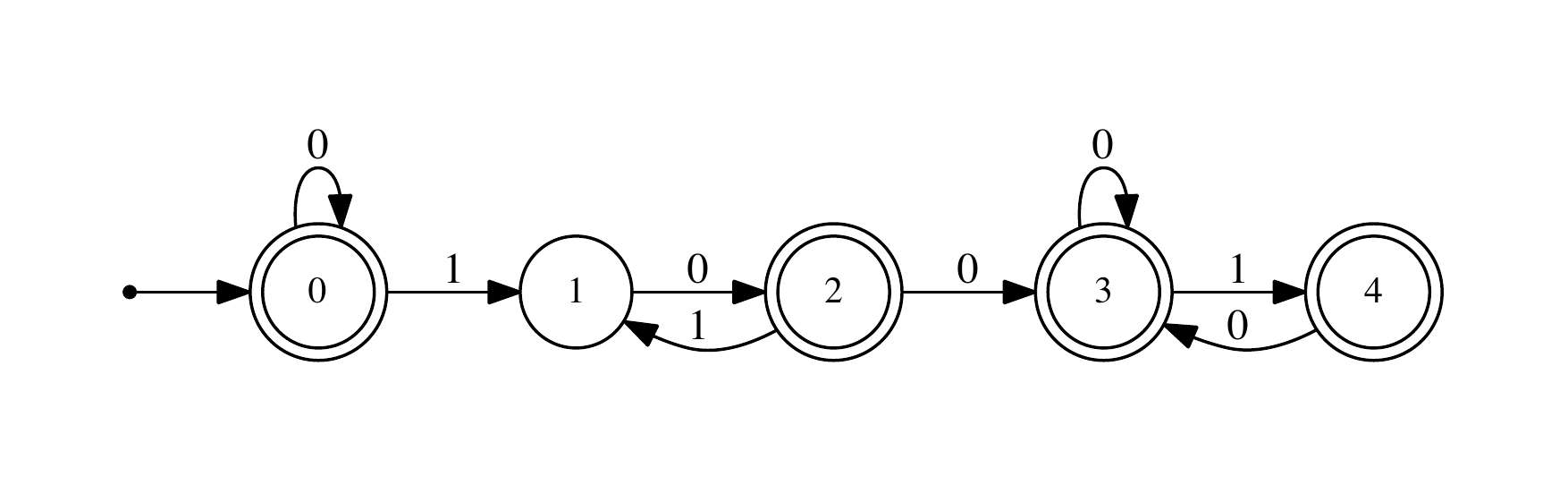}
\end{center}
\caption{Automaton for $L$}
\label{fig10}
\end{figure}
\noindent Next, we create the corresponding automaton {\tt GE} in
{\tt Walnut}, and we use the {\tt Walnut} command

\centerline{\tt eval geq "E x,y,z (n=x+y+z)\&(GE[x]=@1)\&(GE[y]=@1)\&(GE[z]=@1)":}

\noindent giving us the automaton, in Figure~\ref{fig11}, accepting the representations of numbers that are the sum of three elements whose representations are in $L_1$.
\begin{figure}[H]
\begin{center}
\includegraphics[width=5in]{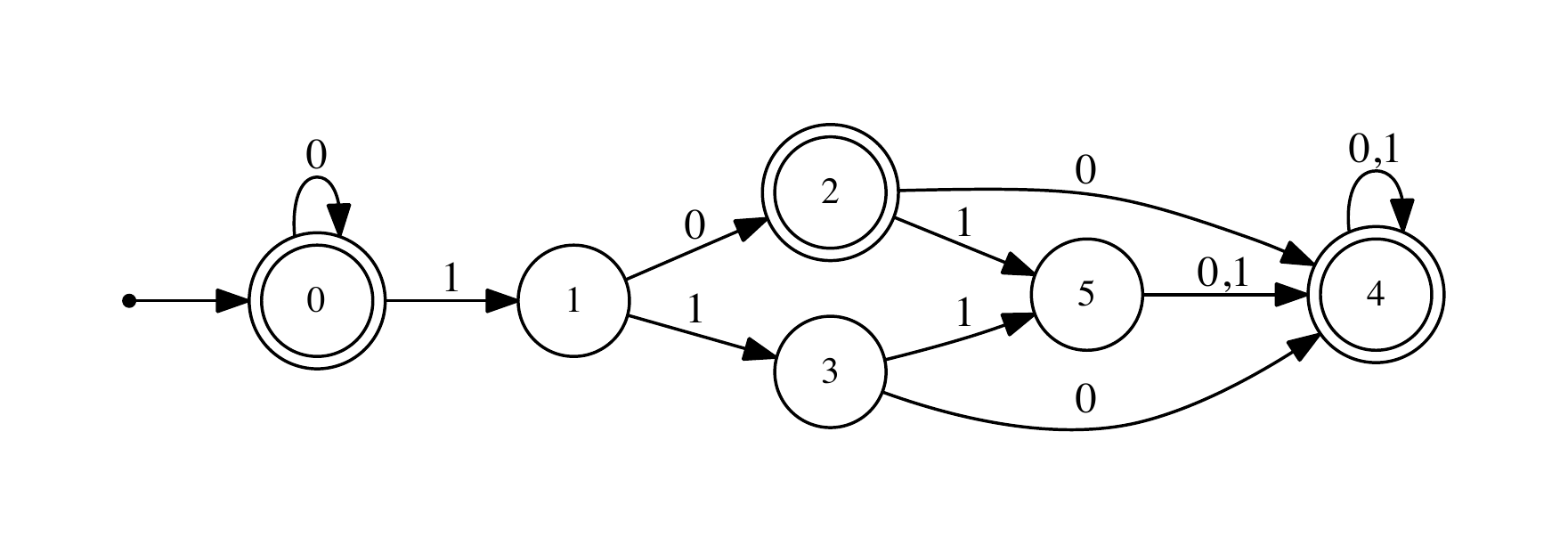}
\end{center}
\caption{Numbers having representations as sums of at most three numbers with representations in $L_1$}
\label{fig11}
\end{figure}
By inspection we easily see that this latter automaton accepts the base-$2$ representation of all numbers except $1$, $3$, $5$, $7$.
Furthermore, it is easy to check that none of $1,3,5,7$
have a representation as a sum of $3$ members of $S_{\geq}$.  This completes the proof of Theorem~\ref{geqa-thm}, which immediately implies Theorem~\ref{geq-thm}.   
\end{proof}

We now show that the bound of $3$ is optimal.
\begin{theorem}
The set $S_{\geq}$ does not form an asymptotic additive basis of order $2$.
\end{theorem}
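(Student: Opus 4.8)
The plan is to exhibit an infinite family of natural numbers, none of which is a sum of two elements of $S_{\geq}$. A natural candidate is the family $N_i = 2^{2i+1} - 1$, i.e., the numbers whose base-$2$ representation is $1^{2i+1}$ (an odd number of consecutive $1$'s), or some similarly structured family with many $1$'s and few $0$'s; any such family is clearly infinite. The key point is that these numbers have almost no ``room'' to be split into two summands each having at least as many $0$'s as $1$'s.

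First I would set up the counting. If $n \in S_{\geq}$ and $(n)_2$ has length $\ell$ (no leading zeroes, so the top digit is $1$), then $n$ has at least $\lceil \ell/2 \rceil$ zeroes, hence $n < 2^{\ell}$ with at least half its digits zero; in particular $|(n)_2|_1 \le \lfloor \ell/2 \rfloor$. Now suppose $N_i = x + y$ with $x, y \in S_{\geq}$. Since $N_i < 2^{2i+1}$, both $x$ and $y$ are less than $2^{2i+1}$, so each has base-$2$ length at most $2i+1$, and therefore $|(x)_2|_1 \le i$ and $|(y)_2|_1 \le i$. The next step is the crux: I want to argue that adding two numbers each with at most $i$ one-bits cannot produce a number with $2i+1$ one-bits. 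This is essentially the digit-sum (or carry) inequality $s_2(x+y) \le s_2(x) + s_2(y)$, where $s_2$ denotes the binary digit sum: carries can only decrease the digit sum. Hence $s_2(N_i) = s_2(x+y) \le s_2(x) + s_2(y) \le i + i = 2i < 2i+1 = s_2(N_i)$, a contradiction. Therefore no $N_i$ is a sum of two elements of $S_{\geq}$, and since there are infinitely many such $N_i$ — indeed they are arbitrarily large — $S_{\geq}$ is not an asymptotic additive basis of order $2$.

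Alternatively, and more in the spirit of the paper, one could phrase this computationally: form the {\tt Walnut} automaton accepting those $n$ expressible as $x+y$ with $x,y \in S_{\geq}$ (using a context-free-to-automaton step is not available, but one can instead work with the \emph{overapproximation} $S_{\leq} \cup S_{=}$-type regular superset, or directly observe the digit-sum bound), and check that its complement is infinite. But the self-contained digit-sum argument above is cleaner and avoids needing $S_{\geq}$ itself to be regular (it is only context-free).

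The main obstacle is pinning down the inequality $s_2(x+y) \le s_2(x) + s_2(y)$ cleanly and making sure the length bound ``$x, y < 2^{2i+1} \Rightarrow |(x)_2|_1, |(y)_2|_1 \le i$'' is applied correctly — namely that membership in $S_{\geq}$ for a number below $2^{2i+1}$ really does force at most $i$ one-bits. A small edge case to check is whether $x$ or $y$ could equal $0$ (it is in $S_{\geq}$ vacuously, or one may note $0 = (\epsilon)_2$), but this only helps the argument. Everything else is routine.
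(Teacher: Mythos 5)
Your proof is correct. You use the same witness family as the paper (numbers of the form $2^n-1$), but the mechanism of the contradiction is genuinely different and, to my mind, cleaner. The paper splits into cases on the bit-lengths of $x$ and $y$, reduces to the case where $(x)_2 = 1^i 0 t$ has full length $n$ and $(y)_2 = 1\overline{t}$ is the bitwise complement of the tail, and then counts zeroes in $t$ versus $\overline{t}$ to contradict $x \in S_{\geq}$. You instead combine two general facts: a number in $S_{\geq}$ of bit-length $\ell$ has at most $\lfloor \ell/2\rfloor$ one-bits, and the binary digit sum is subadditive, $s_2(x+y)\le s_2(x)+s_2(y)$, since carries only decrease it. This gives $2i+1 = s_2(2^{2i+1}-1)\le i+i$, a contradiction, with the $y=0$ and one-summand cases absorbed automatically. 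The one subtlety you handled well (perhaps without flagging it) is the restriction to odd exponents: for $N=2^{2i}-1$ the same bounds only give $2i \le 2i$, so the parity of $2i+1$ is doing real work; the paper's finer case analysis covers all $n$, but infinitely many odd $n$ is all the theorem requires. Your approach buys a shorter, more conceptual argument resting on a standard lemma; the paper's buys a self-contained computation and the slightly stronger statement that \emph{every} $2^n-1$ is non-representable.
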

\begin{proof}
We prove that numbers of the form $2^n - 1$, $n \geq 1$, have no representation
as sums of one or two elements of $S_{\geq}$.  For one element it is clear.
Suppose $2^n - 1 = x + y$ where $x, y \in S_{\geq}$.  If both $x$ and $y$
have less than $n-1$ bits, then their sum is at most $2^n -2$, a contradiction.
Similarly, if both $x$ and $y$ have $n$ bits, then their sum is at least
$2^n$, a contradiction.  So without loss of generality $x$ has $n$ bits
and $y$ has $t < n$ bits.  
Since $(x)_2 \not\in 1^+$,
we can write $(x)_2 = 1^i 0 t$ for $i \geq 1$ and
some word $t$ of length $j = n-i-1$.  Then $(y)_2 = 1 \overline{t}$.
Since $y \in S_{\geq}$ we must have that
$\overline{t}$ contains at least $(j+1)/2$ zeroes.  Then $t$ contains
at most $(j-1)/2$ zeroes.  Then $(x)_2$ contains at most $(j+1)/2 \leq (n-i)/2$ zeroes.  Since $i \geq 1$, this shows $x \not\in S_{\geq}$, a contradiction.   
\end{proof}
One advantage to our method of approximation by regular languages is that it can work in cases where a conventional argument is rather complicated, as in the next section.  Furthermore, the method also gives an $O(\log n)$-time algorithm to find a representation of any given $n$ as a sum of terms of the set, although the implied constant can be rather large.
\begin{remark}
We can also prove that Theorem~\ref{geq-thm} holds even when the summands are required to be
distinct.  We can prove this using the {\tt Walnut} command

{\tt eval geq2 "E x,y,z ((n=x)|(n=x+y)|(n=x+y+z))\&(x<y)\&(y<z)\&}\\ \vphantom{X}\quad\quad\quad\quad\quad {\tt (GE[x]=@1)\&(GE[y]=@1)\&(GE[z]=@1)":}
\end{remark}

\section{The set $S_{=}$}

In this section, we discuss those numbers having an equal number of $0$'s and $1$'s in their base-$2$ expansion.  Such numbers are sometimes called ``digitally balanced".  

\begin{theorem}
Every natural number, except $1$, $3$, $5$, $7$, $8$, $15$, $17$, $25$,
is the sum of
at most three elements of $S_{=}$.
\label{eq-thm}
\end{theorem}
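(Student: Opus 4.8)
The plan is to reprove Theorem~\ref{eq-thm} by the regular-underapproximation argument already used for $S_{\geq}$ in Theorem~\ref{geqa-thm}. Since $L_{=}$ is context-free but not regular — and, more to the point, no regular language can enforce the unbounded condition $|w|_0 = |w|_1$ — I would first search, with a bit of experimentation, for a regular sublanguage $L' \subseteq L_{=}$ that is nevertheless dense enough that the sums of three members of $[L']_2$ cover all but finitely many natural numbers. Any such $L'$ has to be assembled by concatenating finitely many balanced ``blocks'' in a regular pattern: a language of the form $(B_1 + B_2 + \cdots)^{*}$, with each $B_i$ a balanced word, contains only balanced words, and by mixing blocks of several lengths (e.g.\ $10$, $1100$, $1010$, $110100$, $\ldots$) one obtains words of every even length, exponentially many of each length. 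A language built only from blocks ending in $0$ will not suffice by itself, since all of its members are even and no sum of three of them is odd; so $L'$ must also contain some balanced words ending in $1$ — for instance $1001$, or a regular family such as $10\,(1(10)^{*}0)^{*}\,01$ — in order to reach odd targets like $9, 35, 37, \ldots$. As in the $S_{\geq}$ case, I would also union in $0^{*}$ in the appropriate places so that all representations with leading zeroes, including the representation of $0$, are accepted; this is precisely what allows a single query for $n = x+y+z$ to simultaneously capture sums of one, two, or three summands.

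With $L'$ fixed, I would use {\tt Grail} to turn the corresponding regular expression into a minimal DFA, import it into {\tt Walnut} as an automaton {\tt EQ}, and run

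\centerline{\tt eval eq "E x,y,z (n=x+y+z)\&(EQ[x]=@1)\&(EQ[y]=@1)\&(EQ[z]=@1)":}

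\noindent This produces an automaton accepting the base-$2$ representations of the natural numbers expressible as a sum of at most three elements of $[L']_2$. Since $[L']_2 \subseteq S_{=}$, it then suffices to inspect this automaton and verify that it accepts every word in $0^{*}1\Sigma_2^{*}$ except the representations of $1,3,5,7,8,15,17,25$; if the automaton is too unwieldy for direct inspection, one can instead confirm with a second {\tt Walnut} query that its complement accepts exactly those eight strings. This establishes the ``at most three elements of $S_{=}$'' half of the theorem.

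It remains to check that the eight listed numbers really are exceptions — that is, that they are not sums of at most three elements of the full set $S_{=}$, which is the only place the underapproximation could conceivably lose information. But any representation of $N$ as a sum of at most three elements of $S_{=}$ uses summands that are themselves at most $N$, so for $N \le 25$ only the finitely many elements of $S_{=} \cap [0,25] = \{0, 2, 9, 10, 12\}$ are available, and a short case analysis shows that none of $1,3,5,7,8,15,17,25$ is a sum of at most three of these. (Equivalently, this finite check can be handed to {\tt Walnut}, or simply read off from the automaton produced above.) Combining the two halves gives Theorem~\ref{eq-thm}.

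I expect the genuine obstacle to be the choice of $L'$: it must be simultaneously regular, contained in $L_{=}$, and dense enough — containing suitable odd-valued words and words of essentially every length — that the threefold sumset of $[L']_2$ has exactly these eight gaps and no others. If $L'$ is chosen too conservatively, {\tt Walnut} will report additional rejected numbers, forcing one to enlarge $L'$ (while keeping every word balanced) and try again; arriving at a clean $L'$ with no spurious exceptions is the step that requires real experimentation. Everything else — constructing the automata, running the sumset query, inspecting the output, and dispatching the eight small cases — is mechanical, the only subtlety being the customary care with the leading-zero convention and with encoding ``at most three'' as a single existential formula.
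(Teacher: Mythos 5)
Your proposal follows essentially the same route as the paper: the authors take the regular underapproximation $L_2 = 10(10+01+1100+0011)^* + 1(10+01)^*0 + \epsilon$ of $L_{=}$, run the same {\tt Grail}/{\tt Walnut} pipeline, and finish with the finite check that $1,3,5,7,8,15,17,25$ are not sums of at most three elements of $S_{=}$ (which, as you note, reduces to sums from $\{2,9,10,12\}$). The only small difference is that their chosen $L'$ produces one spurious exception, $67$, which they dispose of not by enlarging $L'$ but by exhibiting $67 = 56+9+2$ directly in $S_{=}$ --- a cheaper patch than the re-run you describe, and worth keeping in mind.
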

To prove this we prove the following stronger result.
\begin{theorem}
Every natural number,  except 1, 3, 5, 7, 8, 15, 17, 25, 67,  is the sum of
at most three natural numbers whose base-$2$ representations lie
in the regular language
$L_2 = 10(10+01+1100+0011)^* + 1(10+01)^*0 + \epsilon$.
\end{theorem}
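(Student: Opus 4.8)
The plan is to reprove this statement by the same route used for Theorem~\ref{geqa-thm}: exhibit $L_2$ as a regular underapproximation of $S_=$, compute the automaton for three-fold sums with {\tt Walnut}, and read off the exceptional set by inspection. The first step is to check that $[L_2]_2 \subseteq S_=$, i.e., that every word of $L_2$ has as many $0$'s as $1$'s. This is a short combinatorial observation: the prefix $10$ contributes one $0$ and one $1$; each generator $10$, $01$, $1100$, $0011$ of the Kleene star again contributes equally many of each letter; a word of $1(10+01)^*0$ has a leading $1$, a trailing $0$, and a balanced middle; and $\epsilon$ is trivially balanced. To feed words with leading zeroes to the automaton (as {\tt Walnut} requires) and to include $0$ itself, I would actually run {\tt Grail} on the padded expression $0^*(10(10+01+1100+0011)^* + 1(10+01)^*0 + \epsilon)$, exactly as $0^*$ was inserted for $L_1$.

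Second, I would build a DFA for this language with {\tt Grail} ({\tt retofm | fmdeterm | fmmin | fmcomp | fmrenum}), import it into {\tt Walnut} as an automaton {\tt EQ}, and issue the command

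\centerline{\tt eval eqsum "E x,y,z (n=x+y+z)\&(EQ[x]=@1)\&(EQ[y]=@1)\&(EQ[z]=@1)":}

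\noindent Since $\epsilon$ (the representation of $0$) lies in $L_2$, allowing any of $x,y,z$ to be $0$ is automatic, so this single formula captures ``sum of \emph{at most} three'' elements. The result is a DFA accepting exactly the base-$2$ representations of those $n$ that are such a sum.

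Third, I would inspect that DFA and verify that it accepts the base-$2$ representation (with or without leading zeroes) of every natural number other than $1,3,5,7,8,15,17,25,67$; because the automaton accepts precisely the numbers representable as a sum of at most three elements of $[L_2]_2$, this inspection \emph{is} the proof of the theorem. As a closing remark I would note how this yields Theorem~\ref{eq-thm}: since $[L_2]_2 \subseteq S_=$, the conclusion holds for all $n$ except possibly $67$, and $67 = 56 + 9 + 2$ exhibits $67$ as a sum of three elements of $S_=$ (the summands having base-$2$ representations $111000$, $1001$, $10$), which is exactly why $67$ appears in the exception list here but not in Theorem~\ref{eq-thm}.

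The main obstacle is not any single computation but the \emph{design} of $L_2$: it must be simple enough that $[L_2]_2 \subseteq S_=$ is transparent, yet rich enough that three-fold sums miss only finitely many integers. The four-letter generators $1100$ and $0011$ are included precisely to provide the extra carrying flexibility that $10$ and $01$ alone do not, and whether this particular choice suffices is exactly what the {\tt Walnut} computation settles; with a poorer choice of $L_2$ the exception set would be infinite and the method would not deliver the optimal bound of $3$.
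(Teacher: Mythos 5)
Your proposal is correct and follows essentially the same route as the paper: pad the expression with $0^*$, build the DFA with {\tt Grail}, run the three-summand {\tt Walnut} query, and read off the exceptions $1,3,5,7,8,15,17,25,67$ by inspecting the resulting automaton. The only additions — the explicit check that every word of $L_2$ is balanced, and the remark that $67=56+9+2$ explains why $67$ drops out of the exception list for $S_=$ itself — match what the paper does in the surrounding text.
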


\begin{proof}
We used the {\tt Grail} command

{\tt echo '0*10(10+01+1100+0011)*+0*1(10+01)*0+0*' | retofm |}\\
{\tt fmdeterm | fmmin | fmcomp | fmrenum > e1}

\noindent to find the 16-state automaton below in Figure~\ref{fig3}.
\begin{figure}[H]
\begin{center}
\includegraphics[width=6.5in]{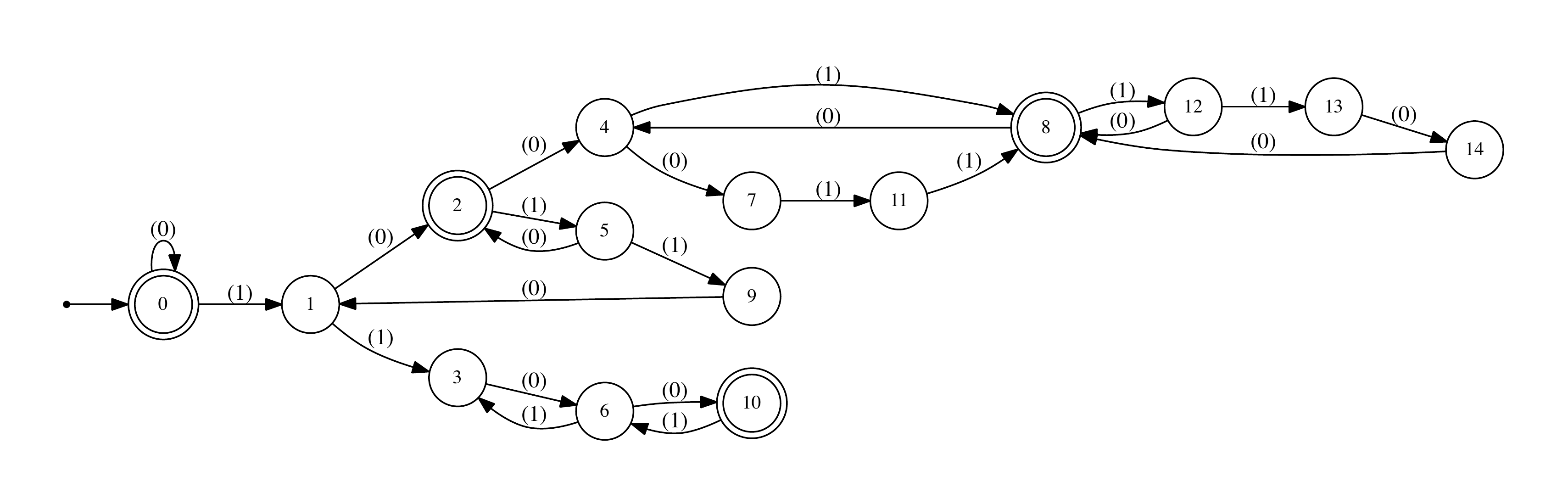}
\end{center}
\caption{Automaton for $0^*10(10+01+1100+0011)^*+0^*1(10+01)^*0+0^*$ }
\label{fig3}
\end{figure}

We then built the corresponding automatic sequence {\tt QQ} in {\tt Walnut}
and issued the command

\centerline{\tt eval eqq "E x,y,z (n=x+y+z)\&(QQ[x]=@1)\&(QQ[y]=@1)\&(QQ[z]=@1)":}

\noindent which produced the 12-state automaton in Figure~\ref{fig4}.  
The total amount of computation time here was 226497 ms, and involved the determinization of an NFA of 1790 states, so this was quite a nontrivial computation for {\tt Walnut}.  By inspection we easily see that
this automaton accepts the base-$2$ representations of all integers except
1, 3, 5, 7, 8, 15, 17, 25, 67.   
\end{proof}

\begin{figure}[H]
\begin{center}
\includegraphics[width=6.5in]{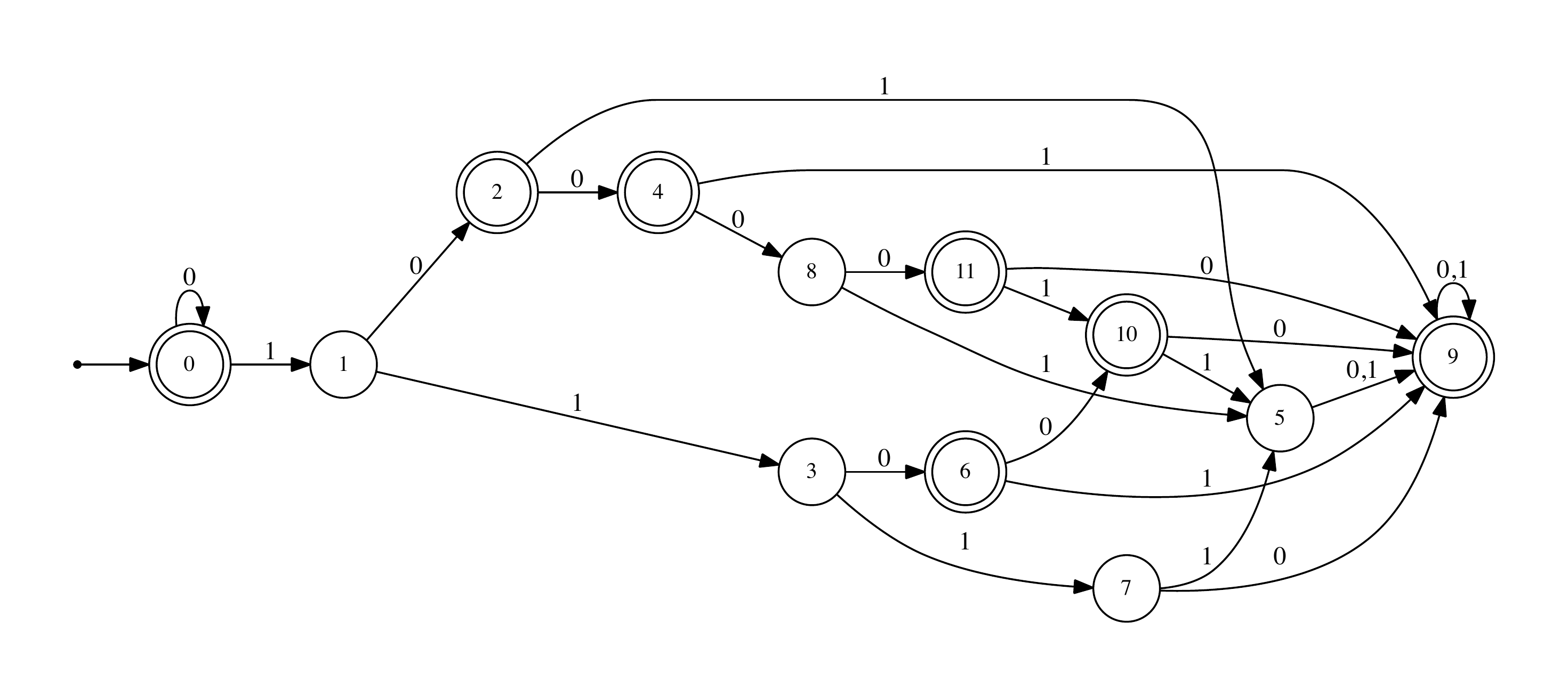}
\end{center}
\caption{Automaton accepting those $n$ that are the sum of at most three numbers with representations in $L_2$}
\label{fig4}
\end{figure}

\begin{proof}[Proof of Theorem~\ref{eq-thm}]
We see that $10(10+01+1100+0011)^*+0^*1(10+01)^*0$ is a regular underapproximation of $L_{=}$.  It is now easy to check that indeed $1,3,5,7,8,15,17,25$ do not have
representations as a sum of at most three members of $S_{=}$, while $67$ has the representation
$67 = 56 + 9 + 2$.  This concludes the proof.   
\end{proof}

We now show that the bound of $3$ is optimal:

\begin{theorem}
There are infinitely many natural numbers that are not the sum of one or two members
of $S_{=}$.
\label{notsum}
\end{theorem}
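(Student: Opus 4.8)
The plan is to exhibit an explicit infinite family, namely the numbers $N_n = 2^n-1$ for $n \ge 1$, whose binary representation is the all-ones string $1^n$, and to show that none of them is a sum of one or two members of $S_{=}$. That $N_n$ itself is not in $S_{=}$ is immediate, since $(N_n)_2 = 1^n$ has $n$ ones and no zeros. So the whole content is to rule out an equation $N_n = x+y$ with $x,y \in S_{=}$.

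First I would pin down the bit-lengths of $x$ and $y$, assuming without loss of generality $x \ge y$. If both $x$ and $y$ had at most $n-1$ bits, then $x+y \le 2(2^{n-1}-1) = 2^n-2 < N_n$; and if $x$ had at least $n+1$ bits, then $x \ge 2^n > N_n$. Hence $x$ has exactly $n$ bits, so $x \ge 2^{n-1}$, and therefore $y = N_n - x \le 2^{n-1}-1$ has at most $n-1$ bits. Next comes the carry analysis, which I expect to be the crux. Write $x = x_{n-1}\cdots x_0$ and pad $y$ to $n$ bits with leading zeros, so $y = y_{n-1}\cdots y_0$ with $y_{n-1}=0$. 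Adding $x+y$ digit by digit from the low end, one checks by a short induction that producing the string $1^n$ with no carry out of the top position forces no carries at all, and hence $x_i + y_i = 1$ for every $i$; that is, $y$ as a length-$n$ string is the bitwise complement $\overline{x}$. Since $x$ has $n$ bits it is either $1^n$ or of the form $1^j 0 w$ with $j \ge 1$. The first case is impossible because $1^n \notin S_{=}$ for $n \ge 1$ (and $y=0$ is likewise impossible, since then $x = 1^n$). In the second case, $x \in S_{=}$ forces $n$ even with exactly $n/2$ ones, so $w$ contains $n/2-j$ ones; then $\overline{x} = 0^j 1 \overline{w}$, and stripping the $j$ leading zeros shows that the canonical representation of $y$ is $1\overline{w}$, which has $n/2$ ones but only $n/2-j$ zeros. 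Since $j \ge 1$, these are unequal, so $y \notin S_{=}$, a contradiction. This completes the argument, and the infinitude of the family $\{2^n-1\}$ finishes the theorem.

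The main obstacle — and really the only subtle point — is the leading-zero bookkeeping at the end: the complement $\overline{x}$ is automatically digitally balanced as a length-$n$ string, so the imbalance surfaces only when one passes to the canonical, leading-zero-free representation of $y$, the deficit of zeros being exactly the length of the initial block of ones in $x$. Checking the small cases $n = 1,2,3,4$, which give $N = 1,3,7,15$ — all already among the exceptions found earlier — serves as a sanity check. It is also worth remarking why a {\tt Walnut}-style argument is not directly available here: $S_{=}$ is context-free but not regular, so the decision procedure for $k$-automatic sets cannot be applied to $S_{=}$ itself, which is precisely why the elementary argument above is needed.
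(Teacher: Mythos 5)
Your proof is correct, but it takes a genuinely different route from the paper's. The paper proves Theorem~\ref{notsum} by regular \emph{overapproximation}: it passes to the regular superset $S = \{n : |(n)_2| \text{ even}, n \neq 2^k-1\} \supset S_{=}$ and uses {\tt Walnut} to compute the (automatic) set of numbers not expressible as a sum of two elements of $S$, which contains the infinite families $111(11)^*$ and $111(11)^*0$. You instead give a direct, machine-free carry analysis showing that $2^n-1$ is never $x+y$ with $x,y \in S_{=}$: the no-carry induction forcing $y=\overline{x}$ is sound, and your leading-zero bookkeeping (the canonical form of $y$ loses exactly $j$ zeros, where $j\ge 1$ is the length of the initial run of ones in $x$) is exactly right. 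In fact your argument is essentially the same complement-and-count argument the paper already gives for $S_{\geq}$ (that $2^n-1$ is not a sum of two elements of $S_{\geq}$), and since $S_{=}\subseteq S_{\geq}$ you could have cited that theorem and finished in one line — the paper uses precisely this containment trick to get the lower bound for $S_{>}$. What the paper's overapproximation buys, besides illustrating the method that is the point of the article, is the \emph{even} witness family $111(11)^*0$, which is reused later in the proof of Theorem~\ref{bp-thm}; your family $\{2^n-1\}$ consists entirely of odd numbers and would not serve that later purpose. Your closing remark that {\tt Walnut} cannot be applied to $S_{=}$ directly because $L_{=}$ is not regular is accurate, but note that the paper circumvents this not by abandoning automata but by replacing $S_{=}$ with a regular superset.
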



\begin{proof}
We use the method of overapproximation.  Consider
$$S = \{ n \in \Enn \ : \ |(n)_2| \text{ is even but 
$n$ is not of the form $2^k - 1$ } \}. $$
Then it is easy to see that $S_{=} \subset S$.
Furthermore $(S)_2$ is regular, and represented by
the regular language
$$L_3= 1(11)^*(0+100+101)(00+01+10+11)^*.$$

We use {\tt Grail} on the command

{\tt echo '0*+0*1(11)*(0+100+101)(00+01+10+11)*' | retofm | fmdeterm}\\
{\tt | fmmin | fmcomp | fmrenum > ov1}

\noindent giving us the automaton in Figure~\ref{fig23}.
\begin{figure}[H]
\begin{center}
\includegraphics[width=4.5in]{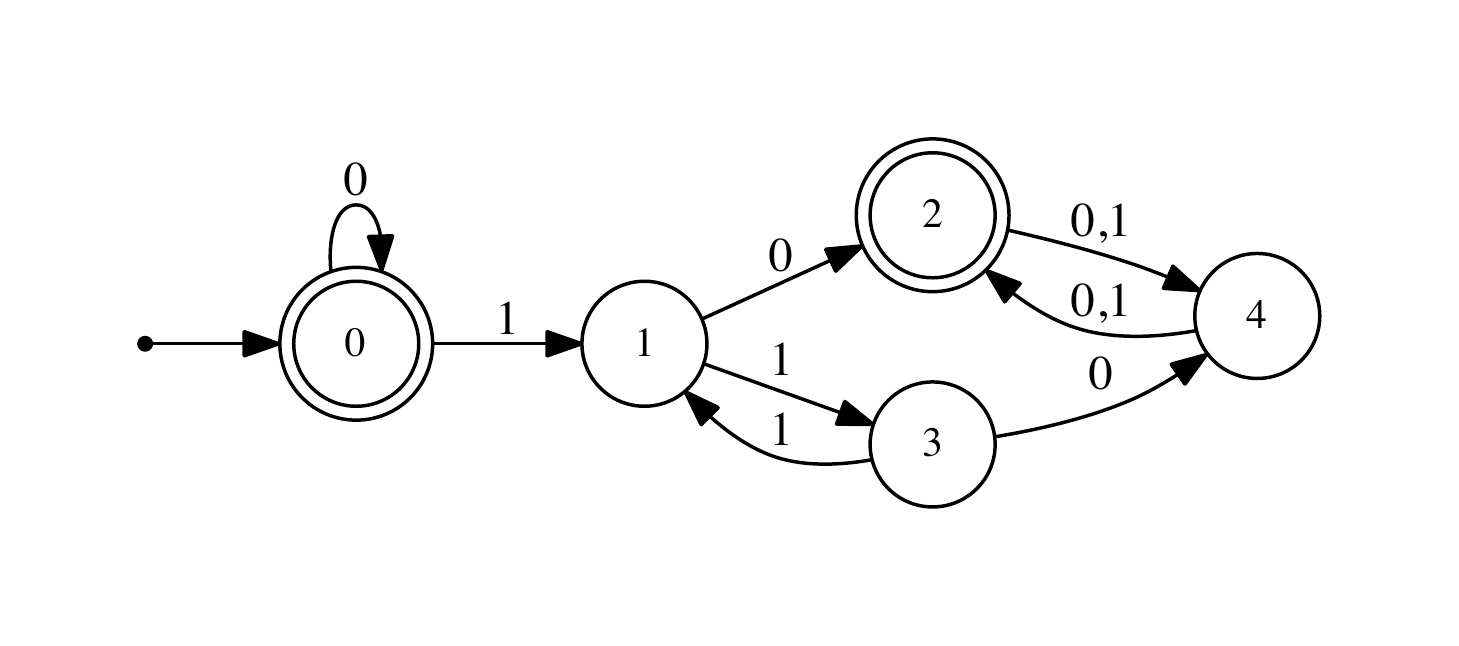}
\end{center}
\caption{Automaton for $L_3$}
\label{fig23}
\end{figure}

Then we ask {\tt Walnut} to give us the base-$2$ representations of all numbers that are {\it not\/} the sum of two members of $S$.  This gives us the automaton in
Figure~\ref{fig24}.
\begin{figure}[H]
\begin{center}
\includegraphics[width=6.5in]{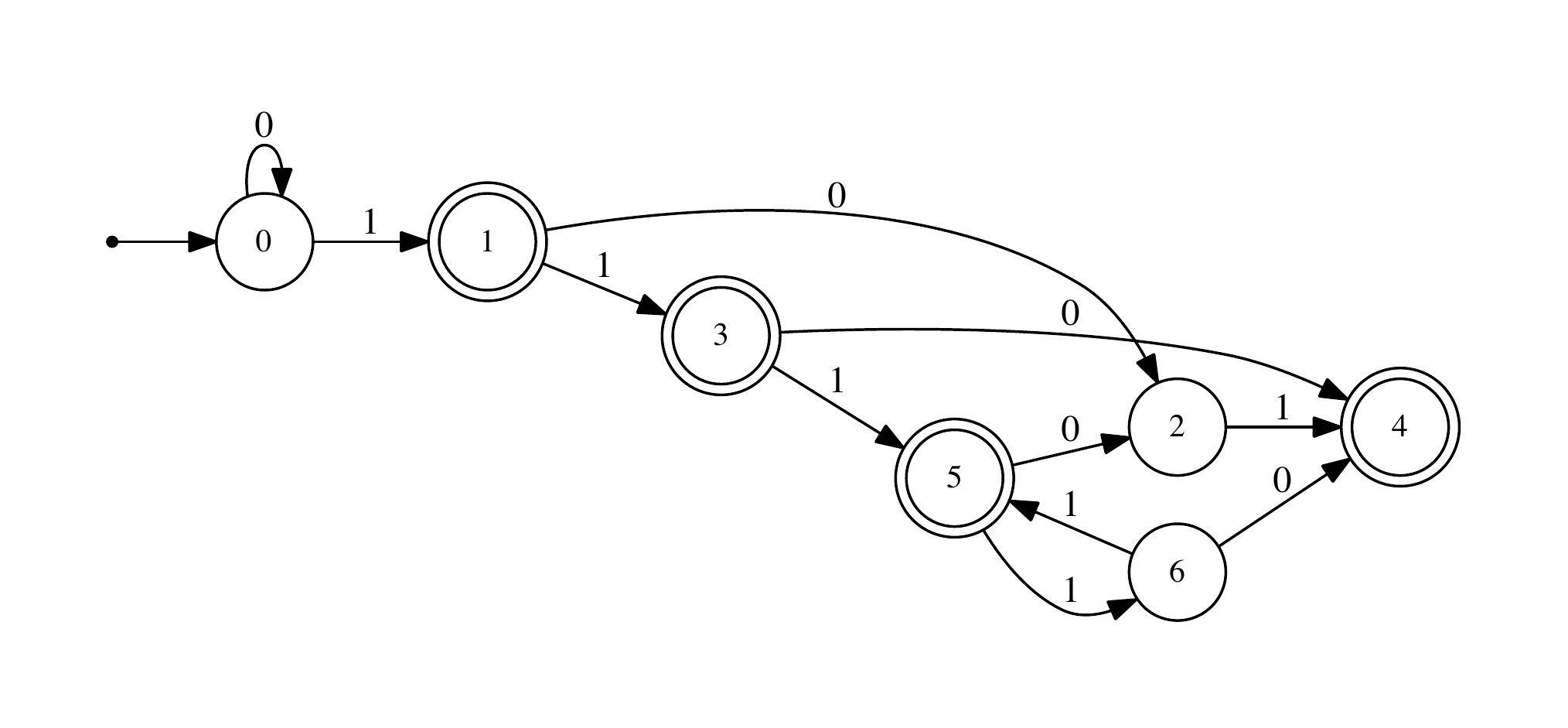}
\end{center}
\caption{Automaton accepting those $n$ that are not the sum of at most two elements of $S$}
\label{fig24}
\end{figure}

By inspection we easily see that numbers with base-$2$
representation $111 (11)^*$ and $111(11)^*0$ have no representation.
Since this set is infinite, we know that $S$ and hence
$S_{=}$ does not form an asymptotic additive basis of order $2$.
 
\end{proof}

\section{The set $S_{<}$}

\begin{theorem}
Every natural number is the sum of at most $2$ elements of $S_{<}$.  The
$2$ is optimal.
\label{lt-thm}
\end{theorem}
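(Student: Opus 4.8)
The plan is to prove the two halves separately: the positive assertion (order $2$ suffices) by regular underapproximation, exactly as in the previous sections, and the optimality by an elementary observation.

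For the upper bound, I would first look for a regular language $L \subseteq L_{<}$ that is dense enough that \emph{two} summands suffice. Since membership in $L_{<}$ only demands \emph{strictly} more $1$'s than $0$'s, there is slack to exploit: any language of binary strings that begin with a $1$ and are then concatenations of blocks each having at least as many $1$'s as $0$'s automatically lies in $L_{<}$, because the leading $1$ creates an excess of one that the blocks can never erode. After some experimentation one can take, for instance,
$$L = 1(1 + 01 + 10 + 0011 + 1100)^{*},$$
prepended with $0^{*}$ so that representations with leading zeroes are handled correctly by {\tt Walnut}. (Note $0$ itself need not be a summand: for $n=0$ use the empty sum, and any $n \in S_{<}$ is already a sum of one element.) Then I would use {\tt Grail} to build the DFA for $L$, import it into {\tt Walnut} as an automatic sequence, evaluate the query asserting that $n$ is a sum of at most two numbers whose base-$2$ representations lie in $L$, and inspect the resulting automaton; it should accept all of $\Sigma_2^{*}$. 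Since $L \subseteq L_{<}$, this shows $S_{<}$ is an additive basis of order $2$.

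For optimality it suffices to exhibit one natural number not in $S_{<}$, since a sum of at most one element of $S_{<}$ is either $0$ (the empty sum) or an element of $S_{<}$ itself. For example $2 = (10)_2 \notin S_{<}$; more strongly, every power of $2$ has base-$2$ representation $10^{k}$, which has more $0$'s than $1$'s, so $S_{<}$ misses infinitely many numbers and is not even an asymptotic additive basis of order $1$. Hence the bound of $2$ is optimal.

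The main obstacle is the choice of the underapproximating language $L$: it must sit inside $L_{<}$ yet be dense enough that every natural number, with no exceptions at all, is a sum of two of its elements — in contrast to the $S_{=}$ case, where three summands were needed and a somewhat delicate language was required. If a first candidate leaves finitely many exceptional $n$, the remedy is either to enlarge $L$ (adjoining further balanced or positively-balanced blocks) or to verify directly that each exception is a sum of at most two elements of $S_{<}$, as was done for $67$ in the $S_{=}$ section. A secondary concern is purely computational: as the $S_{=}$ timing showed, the determinization step inside {\tt Walnut} can be costly, so one wants to keep the automaton for $L$ as small as possible.
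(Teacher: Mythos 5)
Your proposal is correct and follows essentially the same route as the paper, which uses the regular underapproximation $1(1+10+01)^*$ (a sublanguage of yours) together with {\tt Grail} and {\tt Walnut}, obtaining a one-state automaton accepting everything; since your candidate language contains the paper's, the computation you defer to is guaranteed to succeed with no exceptional values. Your optimality argument (e.g., $2=(10)_2\notin S_{<}$, and indeed no power of $2$ lies in $S_{<}$) is the same trivial observation the paper leaves implicit.
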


\begin{proof}

Our proof is based on the following:

\begin{theorem}
Every natural number is the sum of at most two natural numbers whose base-$2$
representations lie in $1(1+10+01)^*$.
\end{theorem}

We use the {\tt Grail} command

{\tt echo '0*+0*1(1+10+01)*' | retofm | fmdeterm | fmmin | fmcomp |}\\
{\tt fmrenum > lt1}

\noindent which gives the automaton below in Figure~\ref{fig13}.
\begin{figure}[H]
\begin{center}
\includegraphics[width=4.5in]{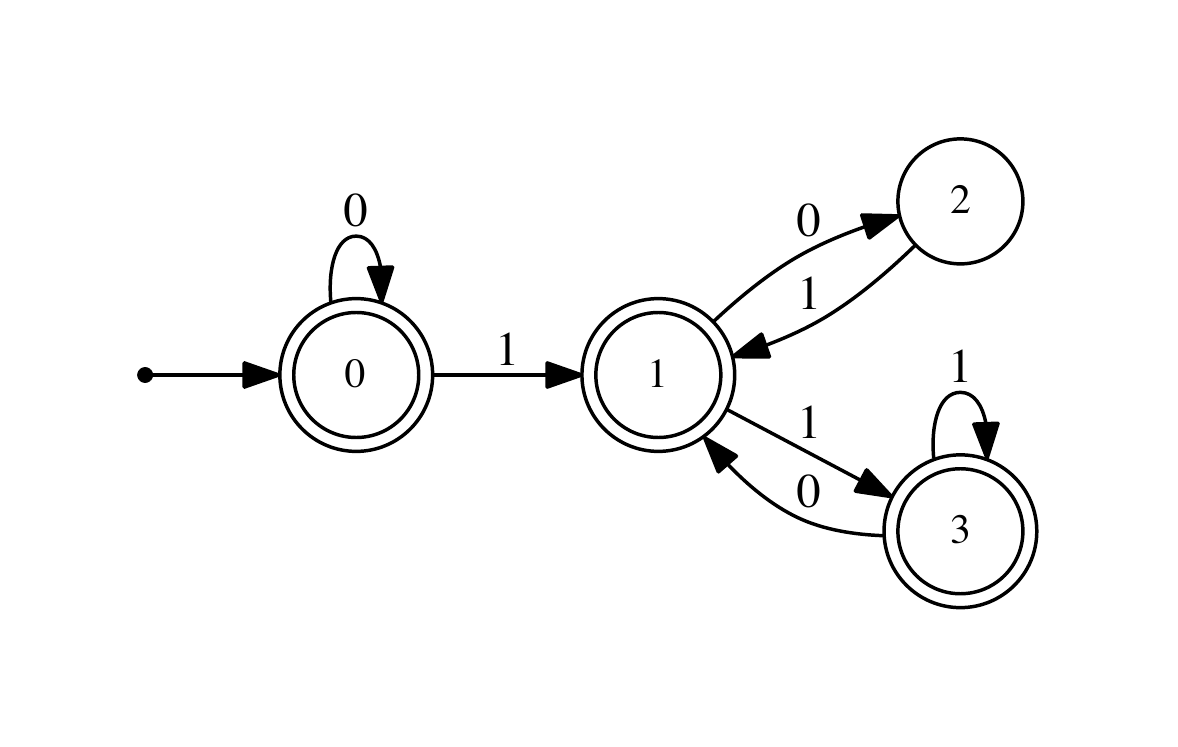}
\end{center}
\caption{Automaton for $1(1+10+01)^*$ }
\label{fig13}
\end{figure}

\noindent Next, we use the {\tt Walnut} command

\centerline{\tt eval lt "E x,y (n=x+y)\&(LT[x]=@1)\&(LT[y]=@1)":}

\noindent which produces a $1$-state automaton accepting everything.  This
concludes the proof. 
\end{proof}

\section{The set $S_{\leq}$}

\begin{theorem}
Every natural number is the sum of at most $2$ elements of $S_{\leq}$.  The
$2$ is optimal.
\label{leq-thm}
\end{theorem}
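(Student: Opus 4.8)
The plan is to derive this theorem from the already-proved case of $S_{<}$, so that no new automaton computation is needed.

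First I would record the elementary set inclusion $S_{<} \subseteq S_{\leq}$: whenever $|(n)_2|_0 < |(n)_2|_1$ one certainly has $|(n)_2|_0 \leq |(n)_2|_1$. Consequently, any expression of a natural number as a sum of at most two elements of $S_{<}$ is at the same time an expression as a sum of at most two elements of $S_{\leq}$. Theorem~\ref{lt-thm} provides such an expression for every natural number, so the identical conclusion holds for $S_{\leq}$. The only point that needs a word of care is $n=0$: under the convention making $(0)_2$ the empty word, $0$ already lies in $S_{\leq}$, and under the convention $(0)_2 = 0$ it is the empty sum; either way it is covered. This bookkeeping around $n=0$ and the convention for $(0)_2$ is the nearest thing to an obstacle, and it is purely cosmetic.

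Next I would dispatch optimality by ruling out order $1$. It suffices to observe that $\Enn \setminus S_{\leq}$ is infinite — in fact $2^k \notin S_{\leq}$ for every $k \geq 2$, since $(2^k)_2 = 1 0^k$ has $k$ zeroes and only one $1$. Hence $S_{\leq}$ is not even an asymptotic additive basis of order $1$, and the bound $2$ is best possible.

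I do not anticipate any genuine difficulty, since the real work was already done in Theorem~\ref{lt-thm}. If a self-contained argument in the style of the earlier sections were preferred, one could instead observe that the regular language $1(1+10+01)^*$ used there is also contained in $L_{\leq}$ (each of its words has strictly more $1$'s than $0$'s), build the corresponding {\tt Walnut} automaton, and verify that the query {\tt eval leq "E x,y (n=x+y)\&(LE[x]=@1)\&(LE[y]=@1)":} again collapses to the one-state automaton accepting everything; but in view of the inclusion $S_{<} \subseteq S_{\leq}$ this extra step is unnecessary.
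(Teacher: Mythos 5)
Your proposal is correct and is essentially identical to the paper's own proof, which likewise deduces the result in one line from the inclusion $S_{<} \subseteq S_{\leq}$ and Theorem~\ref{lt-thm}. The extra remarks on $n=0$ and on ruling out order $1$ are fine but not needed beyond what the paper already records.
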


\begin{proof}
Since $S_{<}$ is a subset of $S_{\leq}$, the result follows immediately. 
\end{proof}

\section{The set $S_{>}$}

\begin{theorem}
Every natural number, except $1$, $2$, $3$, $5$, $6$, $7$, $9$, $10$, $11$, 
$13$, $14$, $15$, $19$, $23$, $27$, $31$, $47$, $63$,
is the sum of at most $3$ elements of $S_{>}$.  The $3$ is optimal.
\label{gt-thm}
\end{theorem}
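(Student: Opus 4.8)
The plan is to follow the same two-pronged strategy used above for $S_{\geq}$ and $S_{=}$. For the upper bound I would use regular underapproximation, and for optimality I would exploit the inclusion $S_{>} \subseteq S_{\geq}$ together with the lower bound already established for $S_{\geq}$.

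For the upper bound, the first task is to find a regular language $L \subseteq L_{>}$ that is dense enough for a sum-of-three construction to reach every sufficiently large $n$. A reasonable starting point is a padded variant of the language $L_1$ used for $S_{\geq}$, forced to produce \emph{strictly} more $0$'s than $1$'s in each word --- for example an expression of the general shape $0^* + 0^*\,1(0+01)^*\,0\,(0+01)^*\,0$, or some mild modification thereof --- but a little experimentation in {\tt Grail} would be needed to settle on a version that is simultaneously correct (every accepted word $w$ satisfies $|w|_0 > |w|_1$) and sufficiently dense. I would then run the regular expression through {\tt retofm | fmdeterm | fmmin | fmcomp | fmrenum} to obtain a DFA, import it into {\tt Walnut} as an automatic sequence {\tt GT}, and evaluate

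\centerline{\tt eval gt "E x,y,z (n=x+y+z)\&(GT[x]=@1)\&(GT[y]=@1)\&(GT[z]=@1)":}

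\noindent The resulting automaton accepts exactly the $n$ that are sums of three numbers with representations in $L$; by inspection one reads off the finite list of non-accepted values. Since $L$ represents a subset of $S_{>}$, this already gives the upper bound for every $n$ outside that list. If (as happened with $67$ in the $S_{=}$ case) the list contains a few stragglers beyond $1,2,3,5,6,7,9,10,11,13,14,15,19,23,27,31,47,63$, each such extra value is handled individually by exhibiting an explicit decomposition into three members of $S_{>}$. A finite computation then confirms that each of the eighteen listed exceptions really has no representation as a sum of at most three elements of $S_{>}$.

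Optimality is then immediate: because $S_{>} \subseteq S_{\geq}$, and because the theorem above shows that no number of the form $2^n - 1$ is a sum of one or two elements of $S_{\geq}$, the same numbers are not sums of one or two elements of $S_{>}$ either. As there are infinitely many such $n$, the set $S_{>}$ is not even an asymptotic additive basis of order $2$, so the bound $3$ is optimal.

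The step I expect to be the main obstacle is the choice of the underapproximating regular expression. It must at once enforce the strict digit-count inequality, be dense enough that three summands suffice for all large $n$ (the constraint ``more $0$'s than $1$'s'' is tighter than ``at least as many'', so the naive analogue of $L_1$ may well leave infinitely many gaps and force a more careful construction), and be small enough that {\tt Walnut}'s determinization of the associated sum NFA terminates in acceptable time --- recall that the $S_{=}$ computation already involved determinizing a $1790$-state NFA. Everything after that --- reading the exception set off the final automaton and checking the handful of explicit small cases --- is routine.
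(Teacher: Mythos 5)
Your proposal matches the paper's proof essentially step for step: the paper uses the regular underapproximation $L_4 = 10(01+10+0)^*0(01+10+0)^*$ (close to the shape you guessed), runs it through {\tt Grail} and {\tt Walnut}, finds exactly one straggler ($79 = 4+8+67$) beyond the eighteen listed exceptions, and proves optimality via $S_{>} \subseteq S_{\geq}$ just as you describe. No substantive differences.
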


\begin{proof}

Our proof is based on the following.

\begin{theorem}
Every natural number, except  $1$, $2$, $3$, $5$, $6$, $7$, $9$, $10$, $11$,
$13$, $14$, $15$, $19$, $23$, $27$, $31$, $47$, $63$, $79$,
is the sum of at most $3$ numbers whose base-$2$ representation is given by
$L_4 = 10(01+10+0)^*0(01+10+0)^*$.
\label{gat}
\end{theorem}

\begin{proof}
We use the grail command

{\tt echo '0*+0*10(01+10+0)*0(01+10+0)*' | retofm | fmdeterm | fmmin}\\
{\tt | fmcomp | fmrenum > gt1}

\noindent giving us the automaton in Figure~\ref{fig17}.
\begin{figure}[H]
\begin{center}
\includegraphics[width=6.5in]{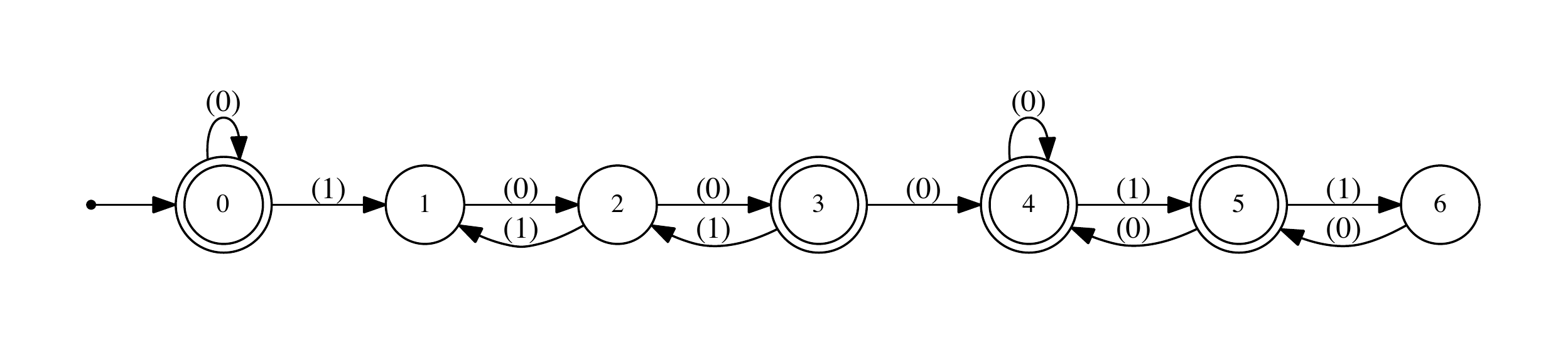}
\end{center}
\caption{Automaton for $10(01+10+0)^*0(01+10+0)^*$}
\label{fig17}
\end{figure}

\noindent Then we use the {\tt Walnut} command

{\tt eval grt "E x,y,z (n=x+y+z)\&(GT[x]=@1)\&(GT[y]=@1)\&(GT[z]=@1)":}

\noindent giving the automaton in Figure~\ref{fig18}.  
\begin{figure}[H]
\begin{center}
\includegraphics[width=6.5in]{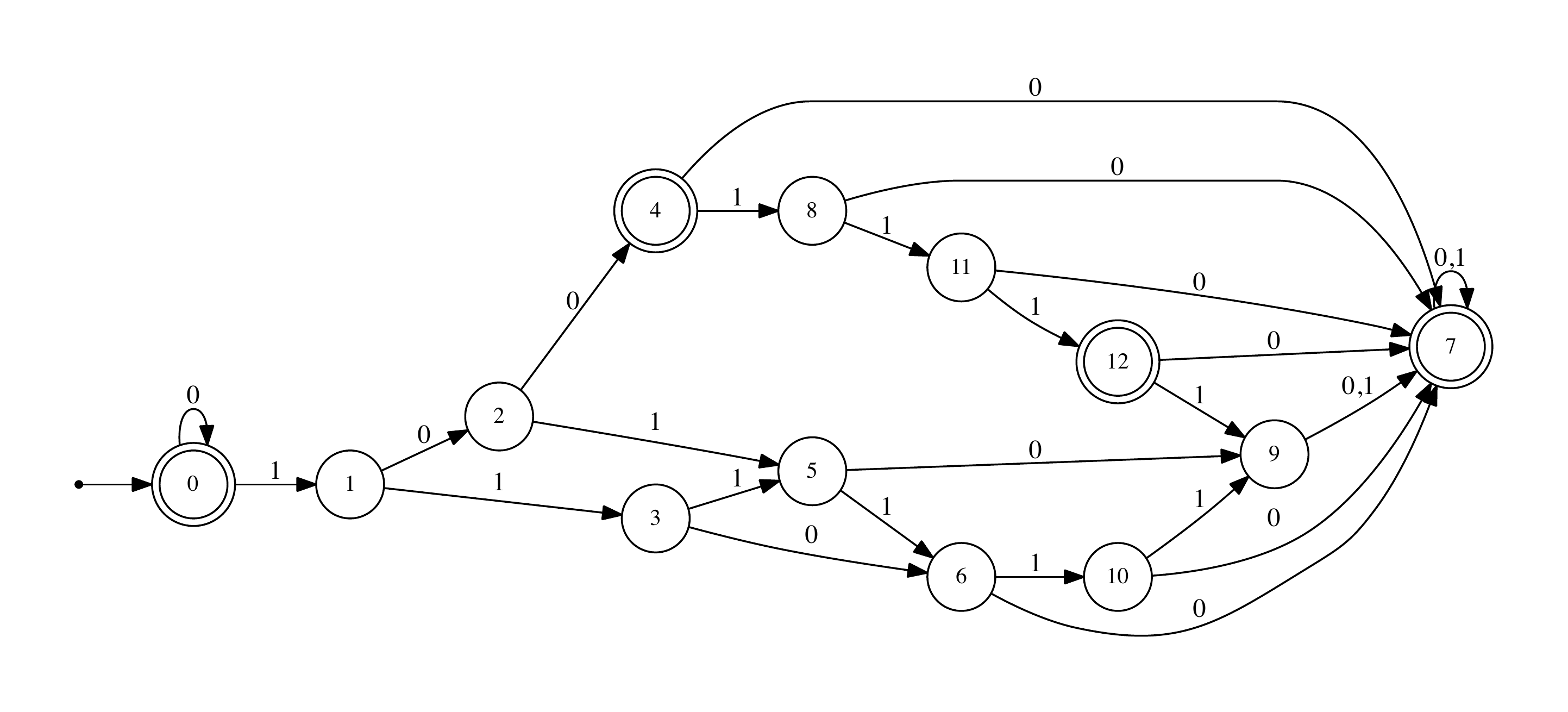}
\end{center}
\caption{Automaton accepting those numbers that
are the sum of at most three elements whose binary representation is in $L_4$}
\label{fig18}
\end{figure}
This concludes the proof.   
\end{proof}

We can now easily check that $1,2,3,5,6,7,9,10,11,13,14,15,19,
23,27,31,47,63$ have no representation as a sum of three elements of $S_{>}$,
while
$79$ has the representation $79=4+8+67$.  So
Theorem~\ref{gat} implies the first claim.  To see that 
two summands do not suffice, note that every element of $S_{>}$ is an
element of $S_{\geq}$, and we already proved above that two summands do not
suffice for $S_{\geq}$. 
\end{proof}

\section{The set $S_{\neq}$}

\begin{theorem}
Every natural number is the sum of at most $2$ elements
of $S_{\neq}$.  The $2$ is optimal.
\label{neq-thm}
\end{theorem}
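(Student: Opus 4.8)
The plan is to obtain the upper bound essentially for free, exactly as in the proof of Theorem~\ref{leq-thm}. The key observation is that $S_< \subseteq S_{\neq}$: for any $n$, the condition $|(n)_2|_0 < |(n)_2|_1$ trivially implies $|(n)_2|_0 \neq |(n)_2|_1$. So, invoking Theorem~\ref{lt-thm}, every natural number is already the sum of at most two elements of $S_<$, hence also the sum of at most two elements of $S_{\neq}$. No exceptional cases arise, so nothing further needs checking for the upper bound.

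For optimality, I would show that $S_{\neq}$ is not an additive basis of order $1$ by exhibiting a single natural number outside $S_{\neq}$ --- for instance $2$, whose base-$2$ representation $10$ has exactly one $0$ and one $1$, so $2 \notin S_{\neq}$; thus $2$ is not a sum of at most one element of $S_{\neq}$, and $2$ summands are genuinely needed. (One could note, more strongly, that the complement of $S_{\neq}$ is exactly $S_{=}$, which is infinite.)

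If instead one wants a self-contained argument avoiding Theorem~\ref{lt-thm}, I would run the same {\tt Grail}/{\tt Walnut} pipeline used for $S_<$: build the automaton for the regular underapproximation with language $1(1+10+01)^*$ (with $0^*$ prepended to capture representations with leading zeroes, including that of $0$), note that every word in this language has strictly more $1$'s than $0$'s --- the leading $1$ is never matched by a $0$, and each following block $1$, $10$, or $01$ contributes a nonnegative surplus of $1$'s --- so it represents an element of $S_{\neq}$, and then have {\tt Walnut} verify the statement $\forall n\ \exists x,y\ (n = x+y) \wedge (b(x)=1) \wedge (b(y)=1)$, which should collapse to a one-state automaton accepting everything. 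There is no real obstacle here: the substantive work, namely choosing a suitable regular subset of $L_{\neq}$, has already been done in the $S_<$ section, and the only mild subtlety --- the handling of small numbers and leading zeroes --- is dealt with exactly as before.
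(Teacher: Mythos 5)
Your proposal is correct and matches the paper's proof, which likewise deduces the result immediately from the inclusion $S_{<} \subseteq S_{\neq}$ and Theorem~\ref{lt-thm}. Your explicit note that optimality does not transfer from a subset and must be checked separately (e.g., $2 \notin S_{\neq}$, or indeed the infinite complement $S_{=}$) is a worthwhile refinement that the paper leaves implicit.
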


\begin{proof}
Since $S_{<}$ is a subset of $S_{\neq}$, the result follows immediately. 
 
\end{proof}

\section{The totally balanced numbers}

We say that a word $x \in \{ 0,1 \}^*$ is {\it totally balanced} if 
\begin{itemize}
\item[(a)] $|x|_1 = |x|_0$; and
\item[(b)] $|x'|_1 \geq |x'|_0$ for all prefixes $x'$ of $x$.
\end{itemize}
In other words, such a word is a recoding of a word consisting of balanced parentheses, where $1$ represents
a left parenthesis and $0$ represents a right parenthesis.  The
set of all such words forms a {\it Dyck language} \cite{Chomsky&Schutzenberger:1963}.
Given a totally balanced word $x$, we can define its
{\it nesting level} $\ell(x)$ recursively as follows:
\begin{itemize}
\item[(a)]  $\ell(\epsilon) = 0$;
\item[(b)]  If $x = 1y0z$, where both $y$ and $z$ are totally balanced, then
$\ell(x) = \max(\ell(y)+1,\ell(z))$.  
\end{itemize}

Consider the set of numbers $S_{\rm TB}$ whose base-$2$ representation is totally balanced; note that all such numbers are even.
The elements of $S_{\rm TB}$ form sequence \seqnum{A014486} in the OEIS \cite{oeis}.

\begin{theorem}
Every even number except
$8$, $18$, $28$, $38$, $40$, $82$, $166$ is the sum of at most
$3$ elements of $S_{\rm TB}$.   There are infinitely many even numbers
that are not the sum of at most $2$ elements of $S_{\rm TB}$.
\label{bp-thm}
\end{theorem}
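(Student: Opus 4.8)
The plan is to follow the same two-pronged strategy used throughout the paper: a regular underapproximation to establish the upper bound of $3$, and a regular overapproximation to show that $2$ summands do not suffice infinitely often. For the upper bound, the key observation is that the Dyck language $S_{\rm TB}$ is not itself regular, so we must find a regular sublanguage of totally balanced words that is still dense enough. A natural candidate is the set of words of low nesting level: for instance the words of nesting level $1$ are exactly those in $(10)^+$, i.e. representations of numbers of the form $(4^m-1)\cdot 2/3$, which is far too sparse. Instead I would look at a regular family like $(10)^*(1100+0011+10+01)^*(10)^*$ or, guided by the earlier treatment of $S_{=}$ in the language $L_2 = 10(10+01+1100+0011)^* + 1(10+01)^*0 + \epsilon$, I would intersect $L_2$ with the Dyck condition. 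In fact $10(10+01+1100+0011)^*$ already consists of totally balanced words (every prefix has at least as many $1$'s as $0$'s, and the total is balanced), so this is a ready-made regular underapproximation of $S_{\rm TB}$. First I would feed this regular expression (with $0^*$ prepended for leading zeroes and $\epsilon$ included) to \texttt{Grail} to build a DFA, import it into \texttt{Walnut} as an automatic sequence $\mathtt{TB}$, and then issue the command
\begin{center}
\tt eval tbb "E x,y,z (n=x+y+z)\&(TB[x]=@1)\&(TB[y]=@1)\&(TB[z]=@1)":
\end{center}
Since every summand with representation in this language is even, every $n$ accepted is even; by inspection of the resulting automaton I expect it to accept all even numbers except a short finite list. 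For any exceptional even number not in $S_{\rm TB}$'s underapproximation but claimed representable (the analogue of $67$ in the $S_{=}$ case), I would exhibit an explicit decomposition into three elements of $S_{\rm TB}$ by hand, and I would directly verify that $8, 18, 28, 38, 40, 82, 166$ genuinely have no representation as a sum of at most three totally balanced numbers (a finite check, since the summands are bounded).

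For optimality, I would again use overapproximation. Here I would relax totally balanced to merely: $|(n)_2|$ even and $|(n)_2|_1 \geq |(n)_2|_0$ — the prefix condition is dropped, since keeping it exactly would not be regular, but dropping it only enlarges the set. Actually, a cleaner regular superset $S''$ is the one used in Theorem~\ref{notsum}, namely $\{n : |(n)_2| \text{ even}, n \neq 2^k-1\}$, since every totally balanced word has even length and is never all $1$'s. We have $S_{\rm TB} \subset S''$, so it suffices to show $S''$ is not an asymptotic additive basis of order $2$ — but this is literally Theorem~\ref{notsum}, already proved: numbers with representation $111(11)^*$ and $111(11)^*0$ are not sums of two elements of $S''$, and there are infinitely many of them. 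Hence there are infinitely many (even ones among $111(11)^*0$) that are not sums of two elements of $S_{\rm TB}$.

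The main obstacle I anticipate is \emph{choosing the underapproximating regular language so that exactly the claimed seven exceptions survive}. A language that is too thin will leave extra exceptions (forcing a longer hand-checked list or showing the candidate language is simply wrong for this theorem), while verifying the prefix-balance property of whatever regular expression I pick must be done carefully — it is easy to write a regular expression whose words are globally balanced but violate condition (b) on some prefix (e.g. allowing a standalone $01$ block at the start). I would therefore restrict to regular expressions of the form $10 \cdot (\text{blocks})^* \cdot (\text{safe suffix})$ where each block, concatenated after any balanced-and-nonnegative prefix, preserves the running-sum invariant, and double-check this invariant blockwise before trusting \texttt{Grail}. Once the language is fixed and the two automata are computed, the remainder is the routine finite verification of the exceptional values and the single explicit decomposition, exactly as in the $S_{=}$ section.
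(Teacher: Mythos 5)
Your argument for the second half of the theorem (optimality of $3$) is correct and is exactly the paper's: $S_{\rm TB}\subseteq S_{=}\subseteq S''$, and the infinitely many even numbers with representation $111(11)^*0$ already shown unrepresentable as a sum of two elements of $S''$ settle it. The problem is in the first half: your concrete candidate for the regular underapproximation fails. The language $10(10+01+1100+0011)^*$ is \emph{not} contained in the set of totally balanced words. For example $1001 = 10\cdot 01$ lies in it, but its prefix $100$ has two $0$'s and one $1$, violating condition (b); worse, $[1001]_2=9$ is odd, and no odd number can be in $S_{\rm TB}$ since a totally balanced word must end in $0$. The same failure occurs for $10\cdot 0011$ and for your alternative family $(10)^*(1100+0011+10+01)^*(10)^*$: any block beginning with $0$ placed at a point where the running surplus of $1$'s over $0$'s is zero breaks the prefix condition. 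This is precisely the trap you warned yourself about, but the language you actually wrote down does not pass the blockwise check you describe, so the underapproximation step of your proof does not get off the ground, and it is not clear that any repaired variant of this block construction would leave exactly the seven claimed exceptions.

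The missing idea is that a Dyck language has a canonical regular underapproximation: truncate by nesting level. The paper takes
$L_6 = \{ x \in \{0,1\}^* : x \text{ is totally balanced and } \ell(x)\le 3\}$,
which is regular because the counter of unmatched $1$'s is bounded by $3$ (a four-counter-value DFA recognizes it), and is by definition a subset of the Dyck language, so no prefix-invariant verification is needed at all. Running the Walnut query on $2n = x+y+z$ with summands in $L_6$ then yields exactly the exceptions $n=4,9,14,19,20,41,83$, i.e., the even numbers $8,18,28,38,40,82,166$, with no analogue of the extra value $67$ to patch by hand; the only remaining finite check is that these seven numbers genuinely have no representation as a sum of at most three elements of $S_{\rm TB}$, which you correctly identify as a bounded search.
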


We prove the first part of Theorem~\ref{bp-thm} using the following.

\begin{theorem}
Consider
$$ L_6 = \{ x \in \{0,1\}^* \ : \ x \text{ is totally balanced  and } \ell(x) \leq 3 \} .$$
Then every even
number except $8, 18, 28, 38, 40, 82, 166$ is the sum of at most $3$ natural numbers whose binary representation is
contained in $L_6$.
\end{theorem}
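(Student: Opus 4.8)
The plan is to follow exactly the method of regular underapproximation already illustrated for $S_\geq$ and $S_=$, but now applied to the Dyck-type language $L_6$. First I would show that $L_6$ is a regular language by bounding the nesting level: a totally balanced word of nesting level at most $3$ is, in the parenthesis interpretation, a concatenation of blocks $1w0$ where $w$ is totally balanced of level at most $2$, and so on down to level $0$ (the empty word). Unwinding this recursion gives an explicit regular expression (finitely many nested "at most $k$ copies" constructions terminate because the level does not increase inside a block), and hence a finite automaton. Concretely, writing $D_j$ for "totally balanced, level $\le j$", we have $D_0=\epsilon$, $D_1=(10)^*$, $D_2=(1D_1 0)^*$, $D_3=(1D_2 0)^*$, each of which is plainly regular; $L_6=D_3$. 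I would then build this automaton with {\tt Grail} (adding $0^*$ in front, as before, to capture representations with leading zeroes and the representation of $0$), and convert it to a {\tt Walnut} automaton, say {\tt TB3}.

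Next I would issue the {\tt Walnut} command

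\centerline{\tt eval tb "E x,y,z (n=x+y+z)\&(TB3[x]=@1)\&(TB3[y]=@1)\&(TB3[z]=@1)":}

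to obtain the automaton accepting the base-$2$ representations of those $n$ expressible as a sum of at most three elements whose representations lie in $L_6$. By inspecting this automaton (as in the earlier sections) one reads off that it accepts precisely the even numbers with the stated finite list of exceptions $8,18,28,38,40,82,166$. Since every word in $L_6$ has equal numbers of $0$'s and $1$'s, every represented number is even, so no odd number can appear; and since $L_6\subseteq (S_{\rm TB})_2$, the conclusion "$n$ is a sum of at most three elements whose binary representation is in $L_6$" is exactly what the theorem asserts. Finally I would verify directly (a tiny finite check) that each of $8,18,28,38,40,82,166$ genuinely fails to be such a sum, completing the proof.

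The main obstacle I anticipate is not the logic but the computation and the bookkeeping around it. The automaton for $L_6$ will be larger than those in the earlier sections (the nested $(1D_j0)^*$ structure blows up under determinization and minimization), and the subsequent sum-of-three construction requires {\tt Walnut} to determinize an NFA that could be quite large — recall that even the $S_=$ case needed a $1790$-state determinization and several minutes of computation. So the practical hard part is getting {\tt Grail} and {\tt Walnut} to produce the automata within feasible time and memory, and then confirming by careful inspection of the resulting (small) output automaton that the accepted set is exactly the even numbers minus that seven-element exceptional set. A secondary subtlety is making sure the leading-zero convention is handled correctly: the $L_6$ automaton must accept $0^* x$ for every $x\in L_6$ (and accept $0^*$, i.e. the representation of $0$), since {\tt Walnut} feeds representations that may carry leading zeroes; this is exactly why the $0^*$ prefix is prepended in the {\tt Grail} command, mirroring the earlier sections.
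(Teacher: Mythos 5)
Your proposal is correct and follows essentially the same route as the paper: build a finite automaton for the regular language $L_6$ (your recursion $D_0=\epsilon$, $D_j=(1D_{j-1}0)^*$ is a valid way to exhibit it), run the {\tt Walnut} sum-of-three query, and read off the exceptional set by inspection, checking the seven exceptions by hand. The only cosmetic difference is that the paper queries {\tt 2*n=x+y+z} so the output automaton ranges over $n$ with $2n$ representable (listing the non-accepted values $4,9,14,19,20,41,83$), whereas you query $n=x+y+z$ directly; both are fine.
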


\begin{proof}
The language $L$ is accepted by the following automaton.
\begin{figure}[H]
\begin{center}
\includegraphics[width=5in]{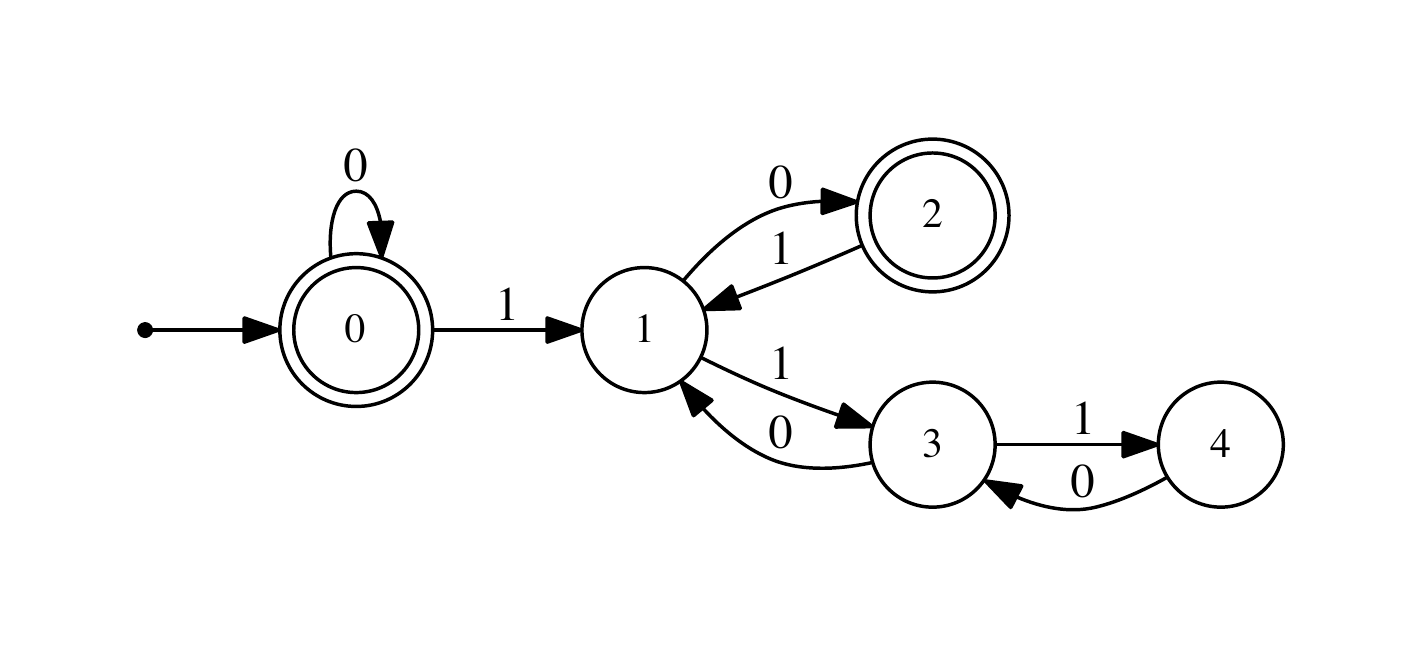}
\end{center}
\caption{Automaton for $L_6$}
\label{fig21}
\end{figure}

Using the {\tt Walnut} command

\centerline{\tt eval 
bp2 "E x,y,z (2*n=x+y+z)\&(TB[x]=@1)\&(TB[y]=@1)\&(TB[z]=@1)":
}

\noindent we get an automaton accepting all $n$ for which $2n$ is
representable.  
\begin{figure}[H]
\begin{center}
\includegraphics[width=6.5in]{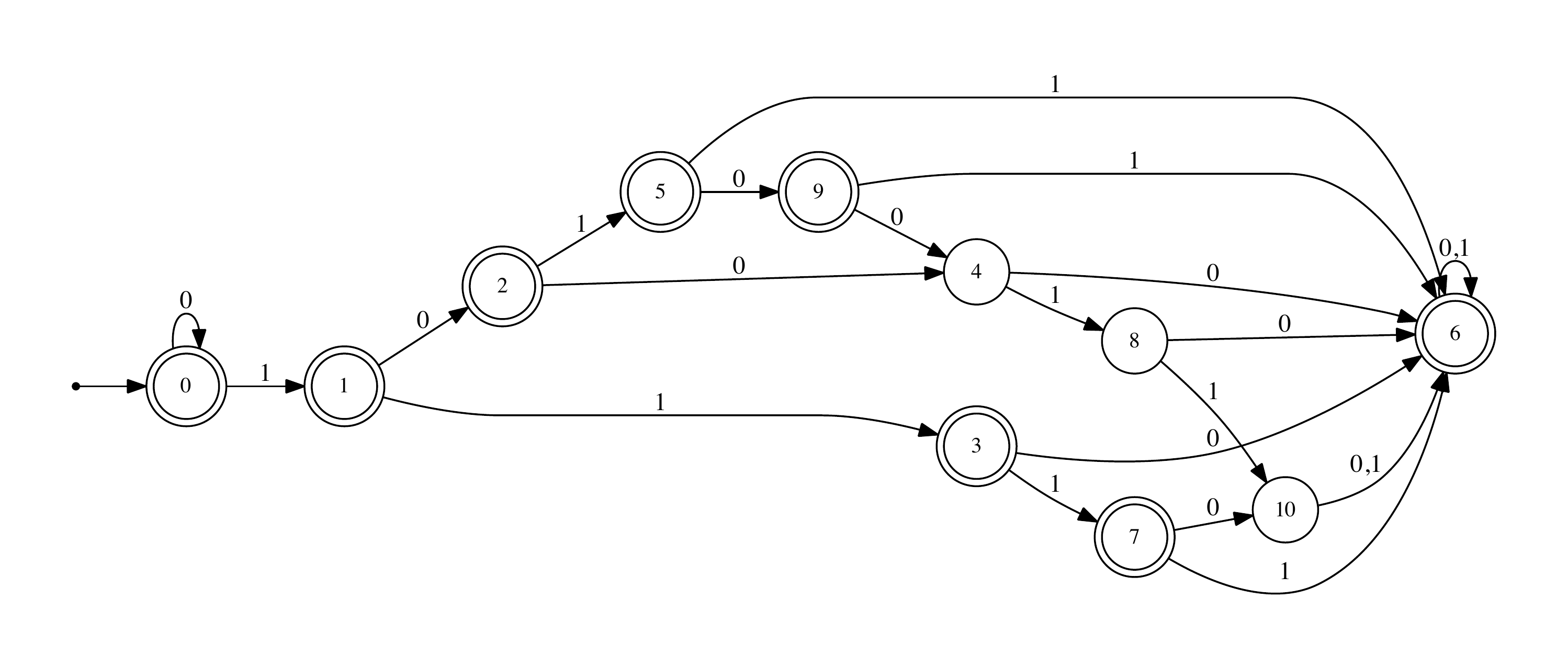}
\end{center}
\caption{Automaton accepting the base-$2$ representation of those $n$ for which $2n = x+y+z+$ with $x,y,z$ having representation in $L_6$}
\label{fig22}
\end{figure}

By inspection it is now easy to verify that the only
$n$ not accepted are $4,9,14,19,20,41,83$. 
\end{proof}

\begin{proof}[Proof of Theorem~\ref{bp-thm}]
Clearly $L_6$ is a subset of $L_{TB}$.   It is easy to check that none of
$8, 18, 28, 38, 40, 82, 166$ have a representation as a sum of three elements of
$S_{TB}$.

To see the second part of the theorem, note that by the proof of Theorem~\ref{notsum} there
are infinitely many even numbers (for example, those with base-$2$ representation
$111(11)^* 0$) not representable as the sum of two elements of $S_{=}$,
and $S_{=}$ is an overapproximation of $S_{TB}$. 
\end{proof}

\section{Generalizations to larger bases}

It would be nice to generalize our results on the languages $L_{=}$, etc., to bases $k > 2$.  However, the appropriate generalization is not completely straightforward, except in the case
of $S_{=}$, the digitally balanced numbers in base $2$.
We can generalize this as follows:
$$ S_{k,=} = \{ n \in \Enn \ : \ |(n)_k|_i = |(n)_k|_j 
	\text{ for all $i, j \in \Sigma_k$} \}.$$

Unfortunately $S_{k,=}$ does not form an additive basis in general, as 
the $\gcd$ of its elements equals $1$ if and only if $k = 2$ or $k = 3$.

\begin{theorem}
The $\gcd$ $g_k$ of the elements of $S_{k,=}$ is
$(k-1)/2$, if $k$ is odd and
$k-1$, if $k$ is even.
\end{theorem}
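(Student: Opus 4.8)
The plan is to sandwich $g_k$ between two divisibility conditions: a congruence obstruction that forces a large divisor of every element of $S_{k,=}$, and two explicit almost-equal elements whose difference is exactly $k-1$. First note that $S_{k,=}$ is nonempty (it contains the number with base-$k$ representation $(k-1)(k-2)\cdots 1\,0$), so $g_k$ is well defined. Now take any $n \in S_{k,=}$, with each digit $0,1,\ldots,k-1$ occurring exactly $m$ times in $(n)_k$. Since $k \equiv 1 \pmod{k-1}$, the number $n$ is congruent modulo $k-1$ to its digit sum $m(0+1+\cdots+(k-1)) = m k(k-1)/2$. If $k$ is even, then $m k(k-1)/2 = m(k/2)(k-1) \equiv 0 \pmod{k-1}$, so $(k-1)\divides n$ for every $n \in S_{k,=}$, whence $(k-1)\divides g_k$. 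If $k$ is odd, set $e = (k-1)/2 \in \Enn$; then $m k(k-1)/2 = mke$, which (because $k$ is odd) is $\equiv 0 \pmod{2e}$ when $m$ is even and $\equiv e \pmod{2e}$ when $m$ is odd. In both cases $e \divides n$, so $e \divides g_k$; and when $m$ is odd, $n$ is \emph{not} divisible by $k-1$.

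For the matching upper bound I would exhibit two elements of $S_{k,=}$ whose difference is exactly $k-1$. For $k \geq 3$, let $n$ be the number whose base-$k$ string has digit $0$ in position $0$, digit $1$ in position $1$, digit $2$ in the leading position $k-1$ (legitimate since $2 \leq k-1$), and the remaining values $3,4,\ldots,k-1$ placed in the remaining positions in any order; let $n'$ be the value of the same string with the digits in positions $0$ and $1$ interchanged. Both $n$ and $n'$ lie in $S_{k,=}$ — each digit still occurs exactly once and the nonzero leading digit is untouched — and $n'-n = (1-0)k^0 + (0-1)k^1 = 1-k$, so $g_k$ divides $|n-n'| = k-1$. (For $k=2$ the claim $g_2 = k-1 = 1$ is immediate.)

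Combining the two bounds finishes the argument: if $k$ is even then $(k-1)\divides g_k\divides(k-1)$ gives $g_k = k-1$; if $k$ is odd then $e \divides g_k \divides 2e$ leaves only $g_k \in \{e, 2e\}$, and since any permutation of the digits $(0,1,\ldots,k-1)$ with nonzero leading digit is an element of $S_{k,=}$ with $m=1$ odd, hence not divisible by $2e = k-1$, we must have $g_k = e = (k-1)/2$. The whole argument is elementary; the one step that calls for care is the odd case, where one must keep track that the residue of $n$ modulo $k-1$ equals $me$ and so depends on the parity of $m$, yet is divisible by $e$ regardless and by $k-1$ only when $m$ is even. The degenerate base $k=2$ — where no digit-balanced number uses each of $0,1$ exactly once without a leading zero — has to be set aside in the upper-bound step, but there the claim is trivial.
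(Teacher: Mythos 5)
Your proof is correct and follows essentially the same strategy as the paper: the lower bound comes from reducing a digitally balanced number modulo $k-1$ via its digit sum $m\,k(k-1)/2$, and the upper bound comes from exhibiting two permutation-of-digits elements of $S_{k,=}$ differing by exactly $k-1$ (you swap the digits $0,1$ in the two lowest positions, the paper swaps $k-2,k-1$ in the two lowest positions — the same trick). Your version is slightly more careful about the edge case $k=2$ and about making explicit that, for odd $k$, an element with $m$ odd is not divisible by $k-1$, but these are refinements of the paper's argument rather than a different route.
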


\begin{proof} 
Let $g_k$ be the $\gcd$ in question.  Consider the two numbers
with base-$k$ representations
\begin{align*}
    & 1 \ 0 \ 2 \ 3 \ \cdots \ (k-3)\ (k-1) \ (k-2) \\
    & 1\  0\  2 \ 3 \ \cdots \ (k-3) \ (k-2)\  (k-1)
\end{align*}
and take their difference.  Now $g_k$ must divide this difference, which
is $k-1$.

Next, take any base-$k$ digitally balanced number $n = \sum_{0 \leq i < t} a_i k^i$
and compute it modulo $k-1$.  We get
    $j(0+1+\cdots+k-1)$
where $j$ is the number of occurrences of each digit.
But this is $j k(k-1)/2$.  It follows that
$$n \equiv \begin{cases}
\modd{j(k-1)/2} {k-1}, & \text{if $k$ is odd}; \\
\modd{0} {k-1}, & \text{if $k$ is even}.
\end{cases}
$$
The result follows. 
\end{proof}

We can now prove

\begin{theorem}
For each $k$ there exists an integer $D = D(k)$ such that every sufficiently large multiple of $g_k$ is a sum of at most $D(k)$ members of $S_{k,=}$.
Furthermore, $D(k) \geq k^{k-1} + 1$.
\label{bases}
\end{theorem}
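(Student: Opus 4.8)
The plan is to establish the two claims of Theorem~\ref{bases} separately, since they are of quite different natures: the first is an existence statement about an additive basis of finite order for the multiples of $g_k$, and the second is a concrete lower bound on the minimal such order.

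For the first part, the natural approach is to imitate the underapproximation technique used throughout the paper, but now applied uniformly in $k$. First I would exhibit an explicit regular (indeed, very structured) subset $S' \subseteq S_{k,=}$ of digitally balanced numbers whose base-$k$ representations are easy to control: for instance, numbers whose base-$k$ representation is a concatenation of blocks, each block being a permutation of $0 1 2 \cdots (k-1)$, possibly with the leading block adjusted so there is no leading zero. One checks such numbers are automatic and that, for a fixed number $m'$ of summands, the set of representable sums is automatic as well. The key point is a carry/digit-counting argument: by choosing the summands to have staggered block lengths (analogous to the ``every third $1$'' trick in the proof of Theorem~\ref{geq-thm}), one can realize, up to a bounded additive error, any sufficiently large target that is $\equiv 0 \pmod{g_k}$ — the congruence condition being forced, since every element of $S_{k,=}$, and hence every bounded sum of them, is a multiple of $g_k$. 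Packaging this gives the existence of $D(k)$. I would not attempt to optimize $D(k)$ here; the theorem only asserts existence of \emph{some} $D(k)$ together with the lower bound, so a crude counting bound suffices.

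For the lower bound $D(k) \ge k^{k-1}+1$, the plan is a pigeonhole/counting argument on the high-order digits. The idea is that a single element of $S_{k,=}$ with $t$ digits must use each of the $k$ digit values exactly $t/k$ times; in particular its leading digits are heavily constrained. I would look at a target $N$ whose base-$k$ representation begins with a long run of the digit $k-1$ (or some similarly extremal prefix), say $N$ with leading block $(k-1)^{r}$ for large $r$. If $N = n_1 + \cdots + n_m$ with each $n_i \in S_{k,=}$, then comparing leading digits and tracking carries shows that to produce $r$ consecutive copies of the digit $k-1$ at the top, the summands whose length is close to that of $N$ must, among them, contribute in each of those $r$ positions a digit-sum that is $\equiv k-1 \pmod k$ after carries; since each such summand is digitally balanced, only a limited ``budget'' of large digits is available to each one. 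Counting the number of distinct high-order digit-patterns a digitally balanced number of a given length can present — there are at most $k^{k-1}$ choices for the ordered tuple of digits in any window of $k-1$ consecutive positions once one digit value is determined by balance, or some such bound — forces $m \ge k^{k-1}+1$. I expect this is where the real work lies: one must choose the target $N$ carefully so that the carry propagation is forced and the counting is tight, and verify the bound $k^{k-1}+1$ rather than merely $k^{k-1}$ or something larger.

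The main obstacle will be the lower-bound half, specifically making the carry analysis rigorous: carries can cascade across block boundaries in the summands, and one must argue that no clever alignment of the $n_i$ can beat the counting bound. A clean way to handle this is probably to reduce modulo a suitable power of $k$ (looking only at a window of the top $r$ digits of $N$ and the corresponding digits of the $n_i$), show the contribution of lower-order digits and incoming carries is bounded and hence negligible for $r$ large, and then run the pigeonhole argument purely on the windowed digit tuples. The first half, by contrast, should be routine given the machinery already developed: it is a direct generalization of the block-staggering constructions used for $S_{\ge}$ and $S_{=}$, with the congruence obstruction from $g_k$ being exactly accounted for by the preceding theorem.
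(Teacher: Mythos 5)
Your first half follows the paper's route in spirit: the paper also takes the regular underapproximation $U T^*$, where $T$ is the set of length-$k$ words using each digit of $\Sigma_k$ exactly once and $U$ is the subset of $T$ not starting with $0$. But where you propose to verify that this set is an asymptotic basis (for multiples of $g_k$) by a hand-rolled ``staggered block'' carry construction, the paper simply observes that $UT^*$ is a regular subset of $(S_{k,=})_k$ of sufficient density and invokes Theorem~1.1 of Bell--Hare--Shallit, which is exactly the general criterion this paper is built around. Your carry construction is left entirely unexecuted, and it is not clear it can be executed: the ``every third $1$'' trick works for $S_{\geq}$ because that set is closed under the relevant digit-thinning, whereas exact digit balance is destroyed by staggering and by carries. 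Realizing a target ``up to a bounded additive error'' is also not the same as realizing it, so as written this half is a plan rather than a proof. It is salvageable by citing the density theorem, which is what you should do.

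The lower bound is where you have a genuine gap: you are missing the one idea that makes it a three-line argument. Every element of $S_{k,=}$ has base-$k$ length divisible by $k$ (each of the $k$ digits appears equally often), so the elements come in widely separated length classes. Let $n$ be the smallest element of $S_{k,=}$ with $tk$ digits and $n'$ the largest with $(t-1)k$ digits; then $n \geq k^{tk-1}+1$ while $n' \leq k^{(t-1)k}-1$, so any representation of $n-1$ uses summands of size at most $n'$ and therefore needs more than $(n-1)/n' > k^{tk-1}/k^{tk-k} = k^{k-1}$ of them, i.e., at least $k^{k-1}+1$. Since this holds for every $t$, there are arbitrarily large targets requiring $k^{k-1}+1$ summands, which is the claimed bound on $D(k)$. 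Your proposed alternative --- pigeonholing on high-order digit windows of a target with leading block $(k-1)^r$ and tracking carries --- does not contain an argument that the number of ``digit patterns'' bounds the number of summands from below, and the count of $k^{k-1}$ patterns is unmotivated numerology; as you yourself note, you cannot see how to make the carry analysis rigorous. That difficulty is a sign you are on the wrong road: no carry analysis is needed, only the size gap between consecutive admissible lengths.
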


\begin{proof}
Let $T$ be the set of all words of length $k$ containing each occurrence of
$\Sigma_k$ exactly once, and let $U$ be the words in $T$ that do not begin with
the symbol $0$.  Then $U T^*$ is a regular subset of the language $(S_{k,=})_k$,
and is sufficiently dense that Theorem 1.1 of \cite{Bell&Hare&Shallit:2017} 
applies.

For the lower bound, let $n$ be the smallest element of $S_{k,=}$ of
length $tk$, and let $n'$ be the largest element of $S_{k,=}$ of
length $(t-1)k$.  Then $n-1$ requires at least $\lceil (n-1)/n' \rceil$
summands.  Now $n \geq  k^{tk-1} + 1$ and $n' \leq k^{(t-1)k} - 1$.
It follows that $(n-1)/n' > k^{k-1}$, as desired. 
\end{proof}

With more work we can prove

\begin{theorem}
All natural numbers except 
$1$, $2$, $3$, $4$, $5$, $6$, $7$, $8$, $9$, $10$, $12$, $13$, $14$, $16$, 
$17$, $18$, $20$, $23$, $24$, $25$, $27$, $28$, $29$, $31$, $35$, $39$, $46$,
$50$, $212$, $214$, $216$, $218$, $220$, $222$, $224$, $226$, $228$, $230$,
$232$, $233$, $234$, $235$, $236$, $237$, $238$, $239$, $240$, $241$, $242$,
$243$, $244$, $245$, $246$, $247$, $248$, $249$, $250$, $251$, $252$, $253$,
$254$, $255$, $256$, $257$, $258$, $259$, $261$, $262$, $263$, $264$, $265$,
$267$, $269$, $270$, $272$, $273$, $274$, $276$, $280$, $284$, $291$, $295$,
are the sum of at most $11$ base-$3$ digitally balanced 
numbers.
\label{base3}
\end{theorem}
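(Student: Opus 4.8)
The plan is to follow the template of regular underapproximation established throughout the paper, applied to the base-$3$ digitally balanced numbers. First I would fix a concrete regular subset $L \subseteq (S_{3,=})_3$ of the language of base-$3$ representations of digitally balanced numbers; a natural candidate is $U T^*$ (padded with $0^*$ to cover leading zeroes) as in the proof of Theorem~\ref{bases}, where $T$ is the set of all length-$3$ strings over $\Sigma_3 = \{0,1,2\}$ that contain each of $0,1,2$ exactly once and $U$ is the subset of $T$ not beginning with $0$. Because every word in $UT^*$ has, in each block of three consecutive digits, exactly one $0$, one $1$, and one $2$, every number it represents is genuinely digitally balanced, so $L$ underapproximates $(S_{3,=})_3$. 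One might need to enlarge $L$ slightly (e.g., allowing a few short extra patterns) to push down the finite exception set, exactly as the earlier sections added terms like $1100+0011$ to $L_2$; the precise regular expression is chosen after experimentation.

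Next I would build the automaton for $L$ using \texttt{Grail} (via a \texttt{retofm | fmdeterm | fmmin | fmcomp | fmrenum} pipeline, working in base $3$), import it into \texttt{Walnut} as an automatic sequence, and issue the query asserting that $n$ is a sum of at most $11$ elements whose base-$3$ representation lies in $L$ — i.e. evaluating the automaton for
\[
\exists x_1,\ldots,x_{11}\ (n = x_1 + \cdots + x_{11}) \wedge \bigwedge_{i=1}^{11} (\text{rep of }x_i \in L),
\]
or more efficiently a nested sum of fewer variables iterated. \texttt{Walnut} then returns a DFA (over base $3$) accepting exactly those $n$ representable this way. By inspecting this DFA I would read off the complement: the finite list of exceptional $n$. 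I expect this list to match, or be contained in, the list in the statement of Theorem~\ref{base3}. Any $n$ that appears as an exception for $L$ but is in fact a sum of $\le 11$ true digitally balanced numbers would be verified by exhibiting an explicit decomposition by hand, just as $67 = 56 + 9 + 2$ and $79 = 4 + 8 + 67$ were handled earlier.

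The main obstacle I anticipate is computational rather than conceptual: the $11$-fold sumset pushes \texttt{Walnut}'s addition-and-projection machinery hard, and the intermediate NFA before determinization could be very large (the $S_{=}$ computation already required determinizing a $1790$-state NFA and took several minutes for just $3$ summands in base $2$). To make this feasible I would (i) keep $L$ as small as possible while still dense enough, (ii) build the sum incrementally, e.g. compute the set of sums of $2$ or $3$ elements first, call that automaton $L'$, then take sums of copies of $L'$, reducing the number of existential quantifiers introduced at once, and (iii) use base-$3$ arithmetic automata in \texttt{Walnut} with care about leading zeroes. The density hypothesis from Theorem 1.1 of \cite{Bell&Hare&Shallit:2017} guarantees that some finite $D$ works, and the lower bound $D(3) \ge 3^{2} + 1 = 10$ from Theorem~\ref{bases} shows $11$ is plausibly close to optimal; confirming that $11$ (and not $10$) is the right constant for this particular $L$ is exactly what the \texttt{Walnut} output decides. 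Finally I would double-check the finite exception set by brute-force search over small $n$ to catch any transcription error in the automaton import.
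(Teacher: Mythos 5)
Your plan is sound in principle, but it is not the route the paper takes, and the difference is worth spelling out. The paper does \emph{not} prove Theorem~\ref{base3} by running {\tt Walnut} on an $11$-fold sumset. Instead it gives a conventional inductive argument built on three pieces: Observation~\ref{obs:equi3}, that $27N+c$ is base-$3$ digitally balanced for $c\in\{5,7,11,15,19,21\}$ whenever $N$ is (these $c$ are exactly the numbers whose base-$3$ representations are the permutations of $012$, i.e., your set $T$ read off by hand); Lemma~\ref{lem:11rep}, that every residue class mod $27$ is a sum of $11$ elements of $\{5,7,11,15,19,21\}$ mod $27$; and Lemma~\ref{lem:geq622}, a strong induction showing every $N\ge 622$ is a sum of $11$ nonzero balanced numbers, with a brute-force base case covering $622\le N\le 17024$ and the inductive step writing $N=\sum_i(27m_i+c_i)$ where $m=(N-C)/27$ is handled by the hypothesis. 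The exceptions below $622$ are then found by direct search. In spirit this is the same underapproximation as your $UT^*$ (the induction is precisely ``append one block of $T$ to each of the $11$ summands''), but executed by hand rather than by automaton manipulation. What the paper's route buys is feasibility: an $11$-fold sumset over a base-$3$ automaton is far beyond the computation the authors already describe as ``quite nontrivial'' for a $3$-fold sum in base $2$ (determinizing a $1790$-state NFA), and your mitigations (incremental doubling of sumsets, minimizing $L$) are plausible but unverified; the intermediate determinizations could still blow up. What your route would buy, if it ran, is a machine-checked exception set and a uniform treatment with the rest of the paper. As written, though, your proposal leaves the decisive steps---that the computation terminates and that the resulting exception list, after hand-patching, is exactly the list in the statement---as expectations rather than established facts, whereas the paper's induction closes these gaps with an explicit modular construction plus finite search.
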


\begin{proof}

\begin{observation}\label{obs:equi3}
     If $N$ is a positive number that is base-$3$ digitally balanced, then $27N+5$, $27N+7$, $27N+11$, $27N+15$, $27N+19$ and $27N+21$ are each positive base-3 digitally balanced numbers. This is because we have $5=(012)_3$, $7=(021)_3$, $11=(102)_3$, $15=(120)_3$, $19=(201)_3$, and $21=(210)_3$.
\end{observation}

We now show by brute force that some range of values can be written as the sum of 11 non-zero base-3 digitally balanced numbers, and then use induction to get 11 base-3 digitally balanced summands for values beyond this range.

\begin{lemma}\label{lem:11rep}
    For all $k\in\{0,1,2,\dots,26\}$ there exist integers $m_1,m_2,\dots,m_{11}\in\{5,7,11,15,19,21\}$ such that $\displaystyle k\equiv \sum_{i=1}^{11}m_i \pmod{27}$.
\end{lemma}
\begin{proof}
    Easy to verify, can fill in all 27 cases.
\end{proof}

\begin{lemma}\label{lem:geq622}
    Every natural number greater than or equal to $622$ is the sum of $11$ non-zero base-$3$ digitally balanced numbers.
\end{lemma}
\begin{proof}
    We can show by brute force checking that every integer $N$ satisfying $622\leq N\leq 17024$ can be written as the sum of 11 non-zero base-3 digitally balanced numbers. So assume that for all integers $m$ satisfying $622\leq m <N$ we have that $m$ can be written as the sum of 11 non-zero base-3 digitally balanced numbers.
    
    Suppose that $N\equiv k\pmod{27}$, for $k\in\{0,1,2,\dots,26\}$. Let $C=\sum_{i=1}^{11}c_i$ where $c_1,c_2,\dots,c_{11}\in\{5,7,11,15,19,21\}$ satisfy $k\equiv \sum_{i=1}^{11}c_i \pmod{27}$. Note such a set of values $c_1,c_2,\dots,c_{11}$ exists by Lemma \ref{lem:11rep}. Observe that $C\leq 11\times21=231$. Consider $m=\frac{N-C}{27}$. For $N>17024$ we have $m=\frac{N-C}{27}\geq \frac{N-231}{27}\geq \frac{17025-231}{27} \geq 622$ and $m=\frac{N-C}{27}<N$. So we can write $m=\sum_{i=1}^{11}m_i$ for non-zero base-3 digitally balanced numbers $m_1,m_2,\dots,m_{11}$ by the inductive hypothesis. Then by Observation~\ref{obs:equi3}, for each $1\leq i\leq 11$ we have $27m_i+c_i$ is a non-zero base-3 digitally balanced number. Thus we can write $$\sum_{i=1}^{11}27m_i + c_i=27\sum_{i=1}^{11}m_i + \sum_{i=1}^{11}c_i =27m +C=N.$$
    Therefore, by induction we have that every integer $N\geq 622$ can be written as the sum of 11 non-zero base-3 digitally balanced numbers. 
\end{proof}

Now we can complete the proof.

By Lemma \ref{lem:geq622} we have that every integer greater than or equal to 622 can be written as the sum of 11 base-3 digitally balanced numbers. Performing a brute force search on all natural numbers $N < 622$ yields a representation for $N$ as the sum of 11 base-3 digitally balanced numbers, except where $N$ is one of the values listed in the statement of the theorem. 
\end{proof}

We do not currently know if the bound $11$ is optimal.  By Theorem~\ref{bases} there is a lower bound of $10$.

\begin{remark}
The results of the last two sections can be viewed as special cases of a more general idea.
Suppose $S \subseteq \Enn$ is a set, with corresponding language $L$ of base-$k$ representations, and suppose
$L$ is closed under concatenation.
If $S$ has a finite subset $F$ with corresponding
language $L_F$ such that $L_F^* \not\subseteq w^*$ for all
words $w$,
then $L_F^*$ is a regular language of exponential density that underapproximates $L$.  If further
$\gcd([L_F^*]_k) = \gcd(S)$
then from \cite{Bell&Hare&Shallit:2017} 
it now follows that
there exists a constant $c$ such that every
sufficiently element of $S$ is a sum of at
most $c$ elements of $[L_F^*]_k$, and hence of $S$.
\end{remark}

\section{Limitations of the method}

It is natural to wonder whether more ``traditional'' problems in additive number theory can be handled by our technique.  
For example, suppose we try to approach Goldbach's conjecture (i.e., every even number $\geq 4$ is the sum of two primes) using a regular underapproximation of the language of primes in base $2$.
Unfortunately, this technique is guaranteed to fail, because a classical result of Hartmanis and Shank \cite{Hartmanis&Shank:1968} shows that every regular subset of the prime numbers is finite. 

Similarly, recent results on the additive properties of palindromes (discussed in 
Section~\ref{intro}) cannot be achieved by regular approximation, because
every regular language consisting solely of palindromes is {\it slender}:
it contains at most a constant
number of words of each length \cite{Horvath&Karhumaki&Kleijn:1987}.

We could also consider Waring's theorem, which concerns the additive properties of ordinary integer powers.  However, our approach also cannot work here, due to the following theorem.  Recall that the $k$-kernel of a set $S$ is defined to be the
number of distinct subsets of the form 
$$S_{e,j} = \{ n \in \Enn \ : \ k^e n + j \in S \text{ for } e \geq 0 \text{ and } 0 \leq j < k^e \}.$$

\begin{theorem} Let $k\geq 2$ be a natural number, and let $S$ be a $k$-automatic subset of $P:=\{n^j \colon n,j\ge 2\}$.  Then there is a finite set of integers $T$ such that $S\subseteq \{ c k^j \colon c\in T, j\ge 0\}$.  Moreover, if the size of the $k$-kernel of $S$ is $d$, then we can take $T$ to be a subset of $\{0,1,\ldots ,k^d-1\}$.
\label{bell-thm}
\end{theorem}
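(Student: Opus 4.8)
The plan is to exploit the fact that a $k$-automatic set $S$ is a finite union of the subsets $S_{e,j}$ making up its $k$-kernel, and to combine this with the sparsity of perfect powers. First I would recall the standard structural dichotomy for automatic sets: any infinite $k$-automatic set either contains an arithmetic-progression-like ``dense'' piece (more precisely, its set of base-$k$ representations, when restricted to one kernel element, either is finite or contains words with a pumpable interior forcing growth of the counting function at a fixed polynomial-or-better rate), or else every element has base-$k$ representation of a very restricted shape. The key quantitative input is that $|P \cap [1,x]| = O(\sqrt{x})$ — in fact the number of perfect powers up to $x$ is $O(\sqrt{x})$ — so any $k$-automatic subset of $P$ has counting function $O(\sqrt{x})$, i.e.\ it is \emph{sparse}.

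The second step is to invoke the classification of sparse automatic sets (due to Cobham, and refined by Szilard--Yu--Zhang and others): a $k$-automatic set is sparse if and only if its language of base-$k$ representations is a finite union of sets of the form $w_0 u_1^* w_1 u_2^* w_2 \cdots u_r^* w_r$ — i.e.\ it lies in a ``polynomial-density'' regular language. Combined with membership in $P$, I would argue that actually no unbounded $u_i^*$ with $u_i \neq 0^*$ can appear: if such a block occurred, pumping it would produce a subsequence of elements of $P$ of the form $a \cdot k^{m\ell} + (\text{lower order bounded stuff})$, and a Mahler-type / $S$-unit or a direct elementary argument shows such an arithmetic-geometric family meets $P$ only finitely often (a perfect power $n^j$ with $n \ge 2, j \ge 2$ cannot lie in infinitely many shifted geometric progressions unless the progression is itself essentially $\{ck^j\}$). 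Hence the only surviving shape is $w \, 0^*$, giving $S \subseteq \{ c k^j : c \in T, j \ge 0\}$ for a finite set $T$.

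For the ``moreover'' clause, I would track the bookkeeping through the kernel. Writing $d$ for the size of the $k$-kernel, each kernel element $S_{e,j}$ is again $k$-automatic with kernel of size at most $d$, and by the above each such set, if nonempty and not equal to $\{0\}$, has smallest element congruent to a fixed residue; pushing the ``strip off leading factors of $k$'' reduction at most $d$ times (since after $d$ steps the kernel must cycle) shows each $c \in T$ can be chosen with $(c)_k$ of length at most $d$, i.e.\ $c < k^d$. One must be slightly careful about $0 \in S$, which is handled by allowing $0 \in T$.

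The main obstacle I anticipate is making the pumping-forces-non-membership-in-$P$ step rigorous and \emph{self-contained}: one wants to show that an expression like $c_1 k^{m\ell + t} + c_2$ (with $c_1, c_2, t, \ell$ fixed and $m$ ranging) hits $P$ only finitely often. This is where one needs either the $O(\sqrt{x})$ counting bound (soft, and enough to contradict the exponential growth of a genuine $u^*$ block with $u \neq 0^*$) or a $p$-adic/Mahler argument for the more delicate mixed cases; I would lead with the counting-function argument since it is elementary and the density mismatch ($\Theta(k^{n/|u|})$ versus $O(k^{n/2})$ would still need care when $|u|=1$, so this case may need the arithmetic argument). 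Getting the constant $k^d$ exactly right, rather than some $k^{O(d)}$, is the secondary fiddly point and will require the careful kernel-cycling argument sketched above rather than a crude bound.
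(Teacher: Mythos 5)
The skeleton of your argument matches the paper's: reduce to pumped families and show that a family of the form $c_1 k^{m\ell}+c_2$ ($m$ ranging) meets $P$ only finitely often unless it is essentially $\{ck^j\}$, then do kernel bookkeeping to get the bound $k^d$. But the step you yourself flag as the main obstacle is a genuine gap, and the tool you lead with cannot close it. The counting bound $|P\cap[1,x]|=O(\sqrt{x})$ says nothing about whether the family $\{c_1k^{m\ell}+c_2 \colon m\ge 0\}$ lies inside $P$: that family has only $O(\log x)$ elements below $x$, and a set of logarithmic density sits comfortably inside a set of density $\sqrt{x}$. Indeed $\{4^m\colon m\ge 0\}$ is an infinite subset of $P$ of exactly this shape (with $c_2=0$), so no density argument can distinguish the allowed families from the forbidden ones. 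What is actually needed is the arithmetic statement that $c_1k^{m\ell}+c_2\in P$ for only finitely many $m$ once $c_2\neq 0$, and this is not elementary. The paper gets it by observing that $u_m=[xy^mz]_k$ satisfies a non-degenerate binary linear recurrence with characteristic roots $k^s$ and $1$, and then invoking the theorem of Shorey and Stewart on pure powers in recurrence sequences --- a result resting on lower bounds for linear forms in logarithms. Your parenthetical ``Mahler-type / $S$-unit or direct elementary argument'' is a gesture at exactly the content that must be supplied; as written, the proof is incomplete at its central point.

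Two smaller remarks. First, the nonvanishing $c_2\neq 0$ does not come for free: the paper arranges it by first proving that each set $S^{(m)}=\{n\in S\colon n\not\equiv 0 \pmod{k^m}\}$ is finite (if the constant term vanished, the pumped values would be divisible by arbitrarily high powers of $k$). You would need an analogous normalization before pumping. Second, your use of the Szilard--Yu--Zhang classification of sparse (poly-log) regular languages is heavier than necessary and introduces multiple starred blocks to control; the paper simply applies the pumping lemma to the infinite automatic set $S^{(m)}$, producing a single pumped block, which is all the Shorey--Stewart step requires. Your kernel bookkeeping for the ``moreover'' clause is essentially the paper's pigeonhole argument (find $i<j\le d$ with $\{n\colon k^in\in S\}=\{n\colon k^jn\in S\}$ and divide out $k^{j-i}$), and that part of your sketch is sound.
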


\begin{proof}
Let $S$ be a $k$-automatic subset of $P:=\{n^j \colon n,j\ge 2\}$.
We claim that for every natural number $m$, the set 
$$S^{(m)}:=\{n\in S\colon n\not\equiv \modd{0} {k^m} \}$$ 
is finite.  To see this, suppose that there is some $m$ such that $S^{(m)}$ is infinite.  Then since $S^{(m)}$ is $k$-automatic, by the pumping lemma it contains a set of elements of the form
$\{[xy^j  z]_k \colon j\ge 0\}$. Let $r$, $s$, and $t$ denote the lengths of $x$, $y$, and $z$ respectively.  We let $X=[x]_k$, $Y=[y]_k$, and $Z=[z]_k$. 
Then \begin{align*}
[xy^n z]_k &=
 Z + k^t(Y+ k^s Y +\cdots + k^{s(n-1)} Y) + k^{sn+t}X \\
 &= k^{sn}\left(k^t X + \frac{k^t Y}{k^s-1}\right)+\left(Z-\frac{k^t Y}{k^s-1}\right).\end{align*}
 Then $u_n:=[x y^n z]$ satisfies the two-term linear recurrence
 $u_n = r_1 u_{n-1} + r_2 u_{n-2}$ with 
 $r_1= (k^s+1)u_{n-1}$ and $r_2= -k^s u_{n-1}$.  In particular, $r_1^2+4r_2^2\neq 0$, and since $k^s$ and $1$ are nonzero and not roots of unity, we have that the recurrence is 
 non-degenerate as long as 
 $k^t X + Y k^t/(k^s-1)$ and $Z-k^t Y/(k^s-1)$ are nonzero.  But since $u_n=[x y^n z]_k\to \infty$ as $n\to\infty$, we see that
 $k^t X + k^t Y/(k^s-1)$ must be nonzero; since $u_n\not \equiv \modd{0} {k^m}$, we see that $Z-k^t Y/(k^s-1)$ is nonzero.  Then, by \cite[Theorem 2]{SS} we deduce that $P\cap \{u_n\colon n\ge 0\}$ is finite, a contradiction.  It follows that $S^{(m)}$ is finite for every $m\ge 1$.  
 
We now finish the proof. Let $d$ denote the size of the $k$-kernel of $S$ and let $T=S^{(d)}\cup \{0\}$.  Then $T$ is a finite set. We claim that $S\subseteq S_0:=\{ c k^j \colon c\in T, j\ge 0\}$.  To see this, suppose that this is not the case. Then there is some $\ell \in S\setminus S_0$.  Pick the smallest natural number $\ell  \not\in S \setminus S_0$.  Since $0\in T$, we have that $\ell$ is positive.  Also since $T\subseteq S_0$,  we have $\ell\not\in T$, and so $\ell$ must be divisible by $k^d$ (since if it were not, it would be in $T$).  Thus $k^d \mid \ell$.  The $k$-kernel of $S$ has size $d$, and so there exist $i,j\in \{0,1,\ldots ,d\}$ with $i<j$ such that 
$$\{n\in \mathbb{N}\colon k^i n\in S\} = \{n\in \mathbb{N}\colon k^j n \in S\}.$$  
In particular, if $k^j$ divides $n$ and $k^j n$ is in $S$, then $k^{i-j}n\in S$.  Then, since $k^d \mid \ell$, we see that $k^{i-j}\ell \in S$; furthermore $k^{i-j}\ell < \ell$ since $\ell$ is positive.  Thus, by minimality of $\ell$, we have $k^{i-j}\ell\in S_0$, and since $S_0$ is closed under multiplication by $k$, we get that $\ell=k^{i-j}\ell k^{j-i}$ is also in $S_0$, a contradiction.  The result follows. 
\end{proof}

\end{document}